\definecolor{red}{RGB}{255,0,0}
\definecolor{blue}{RGB}{0,0,255}
\newtheorem{theorem}{Theorem}
\newtheorem{lemma}[theorem]{Lemma}
\newtheorem{claim}{Claim}
\newtheorem{corollary}[theorem]{Corollary}
\newtheorem{question}{Question}
\newcommand{\ds}{\textsc{Dominating Set in Non-Blocking Graphs}}
\newcommand{\dsc}{\textsc{Dominating Set}}
\newcommand{\tss}{\textsc{Token Sliding in Split Graphs}}
\newcounter{claimb}
\def\claimb{$$\vcenter\bgroup\advance\hsize by -8em\noindent
\refstepcounter{claimb}\ignorespaces\it}        
\def\endclaimb{\rm\egroup\leqno(\theclaimb)$$\global\@ignoretrue}
\noindent \emph{Proof.} {}{#1}{}}{\hfill
\title{Token Sliding on Chordal Graphs}
\author[1]{Marthe Bonamy}
\author[2]{Nicolas Bousquet}
\affil[1]{CNRS, LaBRI, Universit\'e de Bordeaux, France  \thanks{marthe.bonamy@labri.fr}}
\affil[2]{LIRIS, \'Ecole Centrale de Lyon, France \thanks{nicolas.bousquet@ec-lyon.fr}}
\begin{document}

\maketitle

\begin{abstract}
Let $I$ be an independent set of a graph $G$. Imagine that a token is located on any vertex of $I$. We can now move the tokens of $I$ along the edges of the graph as long as the set of tokens still defines an independent set of $G$. Given two independent sets $I$ and $J$, the \textsc{Token Sliding} problem consists in deciding whether there exists a sequence of independent sets which transforms $I$ into $J$ so that every pair of consecutive independent sets of the sequence can be obtained via a token move. This problem is known to be PSPACE-complete even on planar graphs.

In~\cite{DemaineDFHIOOUY14}, Demaine et al. asked whether the Token Sliding reconfiguration problem is polynomial time solvable on interval graphs and more generally in chordal graphs. Yamada and Uehara~\cite{YamadaU16} showed that a polynomial time transformation can be found in proper interval graphs.

In this paper, we answer the first question of Demaine et al. and generalize the result of Yamada and Uehara by showing that we can decide in polynomial time whether two independent sets of an interval graph are in the same connected component.
Moveover, we answer similar questions by showing that: (i) determining if there exists a token sliding transformation between every pair of $k$-independent sets in an interval graph can be decided in polynomial time; (ii) deciding this problem becomes co-NP-hard and even co-W[2]-hard (parameterized by the size of the independent set) on split graphs, a sub-class of chordal graphs.
\end{abstract}

\section{Introduction}
Reconfiguration problems consist in finding step-by-step transformations between two feasible solutions such that all intermediate results are also feasible. Reconfiguration problems model dynamic situations where a given solution is in place and has to be modified, but no property disruption can be afforded. Two types of questions naturally arise when we deal with reconfiguration problems: (i) when can we ensure that there exist such a transformation? (ii) What is the complexity of finding such a reconfiguration? In the last few years reconfiguration problems received a lot attention for various different problems such as proper colorings~\cite{BonamyB13,Feghali0P15}, Kempe chains~\cite{BonamyBFJ15,FeghaliJP15}, satisfiability~\cite{Gopalan09} or shortest paths~\cite{Bonsma13}. For a complete survey on reconfiguration problems, the reader is referred to~\cite{vHeuvel13}. In this paper our reference problem is independent set.

In the whole paper, $G=(V,E)$ is a graph where $n$ denotes the size of $V$ and $k$ is an integer. For standard definitions and notations on graphs, we refer the reader to~\cite{Diestel2005}.
A \emph{$k$-independent set} of $G$ is a subset $S \subseteq V$ of size $k$ of pairwise non-incident vertices.
The $k$-independent set reconfiguration graph is a graph where vertices are $k$-independent sets and two independent sets are incident if they are ``close'' to each other.
In the last few years, three possible definitions of adjacency between independent sets have been introduced.
In the \emph{Token Addition Removal} (TAR) model~\cite{BonamyB14a,KaminskiMM12,MouawadNRSS13}, two $k$-independent sets $I, J$ are adjacent if they differ on exactly one vertex (\emph{i.e.} if there exists a vertex $u$ such that $I = J \cup \{u\}$ or the other way round). In the \emph{Token Jumping} (TJ) model~\cite{BonsmaKW14,Ito2011,KaminskiMM12}, two independent sets are adjacent if one can be obtained from the other by replacing a vertex with another one (in particular it means that we only look at independent sets of a given size). In the \emph{Token Sliding} (TS) model, first introduced in~\cite{HearnD05}, tokens can be moved along edges of the graph, \emph{i.e} vertices can only be replaced with vertices which are adjacent to them (see~\cite{b14} for a general overview of the results for all these models).

In this paper we concentrate on the Token Sliding (TS) model. Given a graph $G$, the \emph{$k$-TS reconfiguration graph of $G$}, denoted $TS_k(G)$, is the graph whose vertices are $k$-independent sets of $G$ and where two independent sets are incident if we can transform one into the other by sliding a token along an edge. More formally, $I$ and $J$ are adjacent in $TS_k(G)$ if $J \setminus I = \{ u \}$, $I \setminus J = \{ v\}$ and $(u,v)$ is an edge of $G$. We then say that the token on $u$ is slided on $v$.
Hearn and Demaine proved in~\cite{HearnD05} that deciding if two independent sets are in the same connected component of $TS_k(G)$ is PSPACE-complete, even for planar graphs.
On the positive side, Kaminski et al. gave a linear-time algorithm to decide this problem for cographs (which are characterized as $P_4$-free graphs)~\cite{KaminskiMM12}. Bonsma et al.~\cite{BonsmaKW14} showed that we can decide in polynomial time if two independent sets are in the same connected component for claw-free graphs. Demaine et al.~\cite{DemaineDFHIOOUY14} described a quadratic algorithm deciding if two independent sets lie in the same connected component for trees. Yamada and Uehara showed in~\cite{YamadaU16} that a polynomial transformation exists in proper interval graphs.
From a parameterized point of view, Ito et al.~\cite{ItoKOSUY14} showed that the problem is $W[2]$-hard parameterized by the number of tokens. However, for planar graphs, the problem becomes FPT~\cite{ItoKO14}.

\paragraph*{Our contribution.}
In their paper, Demaine et al.~\cite{DemaineDFHIOOUY14} asked if determining whether two independent sets are in the same connected component of $TS_k(G)$ can be decided in polynomial time for interval graphs and then more generally for chordal graphs.
An interval graph is a graph which can be represented as an intersection graph of intervals in the real line. Chordal graphs, that are graphs without induced cycle of length at least $4$, strictly contain interval graphs.
In this paper, we prove the first conjecture and we give a new perspective on these questions by proving that (i) the connectivity of $TS_k(G)$ can be decided in polynomial time if $G$ is an interval graph (ii) deciding if $TS_k(G)$ if connected for chordal graphs is co-NP-hard and co-$W[2]$-hard parameterized by the size $k$ of the independent set even for split graphs, a subclass of chordal graphs.
More formally, the paper is devoted to proving the three following results:

\begin{theorem}\label{th:hard}
The following problem is co-NP-hard and co-$W[2]$-hard parameterized by the size of the independent set.
\begin{quote}
 \tss{} \\
 \textbf{Input:} A split graph $G$, an integer $k$.\\
 \textbf{Output:} YES if and only if $TS_k(G)$ is connected.
\end{quote}
\end{theorem}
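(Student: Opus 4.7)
The plan is to reduce from the W[2]-hard problem \dsc{} parameterized by solution size. For the reduction to behave correctly, I would pass through an intermediate problem \ds{}, a restricted form of \dsc{} whose structural hypotheses are tailored to rule out certain undesirable stuck configurations in the constructed graph; showing that \dsc{} reduces to \ds{} in a parameter-preserving way is a preliminary step that I expect to be routine (for instance, via a pendant-vertex padding that preserves minimum dominating set size up to a constant shift). Reducing \ds{} to \tss{} then establishes both the co-NP-hardness and the co-W[2]-hardness (parameterized by the independent-set size) of deciding connectivity of $TS_k(G)$ on split graphs.

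Given an instance $(H, k)$ of \ds{} with $V(H) = \{v_1, \ldots, v_n\}$, I would build a split graph $G$ with partition $(K, I)$ as follows. The clique $K$ has one vertex $k_u$ for each $u \in V(H)$, and the independent set $I$ has one vertex $i_v$ for each $v \in V(H)$ together with a distinguished \emph{anchor} vertex $s$. Set $i_v k_u \in E(G)$ iff $u \in N_H[v]$, and make $s$ adjacent to every vertex of $K$. Taking $k' = k+1$, the goal is to show that $TS_{k+1}(G)$ is disconnected iff $H$ admits a dominating set of size at most $k$.

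For the forward direction, if $D \subseteq V(H)$ is dominating with $|D| = k$, the $(k+1)$-independent set $S := \{s\} \cup \{i_v : v \in D\}$ is an isolated vertex of $TS_{k+1}(G)$. The token on $s$ cannot slide anywhere, since every $k_u$ is adjacent to some $i_v$ with $v \in D$ by the dominating property. Each token on $i_v$ cannot slide into $K$ either, since $s$ is adjacent to every vertex of $K$ and blocks all potential destinations. Because $G$ admits many other $(k+1)$-independent sets for $n$ sufficiently large, this forces $TS_{k+1}(G)$ to have at least two components.

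The hard part will be the backward direction: assuming $H$ has no dominating set of size at most $k$, establish that $TS_{k+1}(G)$ is connected. The structural property built into \ds{} is chosen precisely so that pure-$I$ configurations not containing $s$ are never stuck—each token $i_v$ retains a slide into $K$ via a private neighbor. In any mixed configuration, the $K$-token can slide to $s$; and from any pure-$I$ form $\{s\} \cup D$ with $|D| = k$, the absence of a dominating set lets the $s$-token escape to some $k_u$ with $u \notin N_H[D]$. Composing these three kinds of moves realizes controlled swap operations on the $I$-content of configurations of the form $\{s\} \cup D$, and the key lemma to verify is that these allowed swaps act transitively on the $k$-subsets of $V(H)$ under the no-dominating-set hypothesis; this last combinatorial step is where I expect the bulk of the technical work to lie.
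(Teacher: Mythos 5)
Your high-level plan matches the paper's: reduce \dsc{} through a non-blocking intermediate problem \ds{}, then reduce \ds{} to \tss{}. However both concrete steps as you describe them contain real gaps.

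\textbf{The \dsc{} to \ds{} reduction is not routine.} Your proposed pendant-vertex padding does not preserve the dominating number up to a constant shift. If you attach a pendant $p_v$ to every vertex $v$ of $G$, then a dominating set of the padded graph must intersect $\{v, p_v\}$ for \emph{every} $v$ (to dominate $p_v$, whose closed neighborhood is $\{v, p_v\}$), so the minimum dominating set has size roughly $n$, independently of $k$. The paper's Lemma~\ref{lem:variantdom} instead builds a substantially more involved gadget: $k$ interconnected copies of $G$ together with $k(k-1)$ bundles of $n$ disjoint paths on $2k+2$ vertices each, wired to the copies so that every small set has a private neighbor forced by path length and sparsity arguments (Claim~\ref{clm:noblocking}). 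That construction is exactly where the ``routine'' work lives, and it is not a padding trick.

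\textbf{Your split graph gadget omits the parking spots, which breaks the backward direction.} In the paper's construction the clique has $n+k+1$ vertices, not $n$: beyond $u_1,\dots,u_n$ encoding $V(H)$ there are $k+1$ extra clique vertices $u_{n+1},\dots,u_{n+k+1}$, each with a private partner $w_{n+1},\dots,w_{n+k+1}$ on the independent side. The anchor $w_{n+k+2}$ is adjacent only to $u_1,\dots,u_n$, not to the extra clique vertices. These parking spots supply a canonical target $\{w_{n+1},\dots,w_{n+k+1}\}$, and the backward direction is proved by a potential argument: one shows that as long as the current set is not the target, a macro-step strictly increases $|J \cap \{w_{n+1},\dots,w_{n+k+1}\}|$. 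Your gadget has none of this. A concrete counterexample: take $H$ to be two disjoint edges $a_1b_1$, $a_2b_2$ and $k=1$. Then $H$ has no isolated vertex, so it is not $1$-blocking, and it has no dominating set of size $1$; hence the \ds{} answer is NO and $TS_2(G)$ ought to be connected. But in your $G$ the $2$-independent set $\{i_{a_1}, i_{b_1}\}$ is frozen: $i_{a_1}$'s only candidate destinations are $k_{a_1}$ and $k_{b_1}$, both adjacent to $i_{b_1}$, and symmetrically for $i_{b_1}$. So $TS_2(G)$ is disconnected, and your ``yes iff'' fails. The deeper reason is that your $(k+1)$-subsets of $I \setminus \{s\}$ correspond to $(k+1)$-subsets of $V(H)$, but the \ds{} hypothesis only excludes blocking sets of size at most $k$. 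Consequently, your final ``key lemma'' (transitivity of swaps on $k$-subsets) is not even the right target: the paper never tries to show transitivity, it shows everything can be funneled into the fixed parking configuration, and the parking machinery is what makes that possible.
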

Section~\ref{sec:split} is devoted to this result. Our proof consists in exhibiting a reduction from a variant of the \textsc{Dominating Set} problem. A problem is FPT parameterized by $k$ if it can be decided in time $f(k) \cdot n^c$ where $c$ is a constant and $n$ is the size of the instance. An FPT algorithm is deterministic, thus we have FPT=co-FPT. Moreover, the class $W[2]$ is conjectured to strictly contain the class. In particular it means that this problem is unlikely to be solved in FPT-time parameterized by the size of the solution. For more information on parameterized complexity the reader is referred to~\cite{FlumG06}.

On the positive side, we show that the connectivity of $TS_k(G)$ can be decided in polynomial in interval graphs.

\begin{theorem}\label{th:connec}
 Given an interval graph $G$ and an integer $k$, the connectivity of $TS_k(G)$ can be decided in polynomial time.
\end{theorem}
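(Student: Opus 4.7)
The plan is to give a structural characterization of the connected components of $TS_k(G)$ on interval graphs via a canonical representative per component, and then to show that there are only polynomially many such representatives, which can be enumerated in polynomial time.

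First, I would fix an interval representation of $G$ and the induced linear order $C_1 < C_2 < \cdots < C_m$ of the maximal cliques (interval graphs have at most $n$ maximal cliques, arranged in a clique path). Any $k$-independent set $I$ is encoded as an ordered tuple $(v_1, \ldots, v_k)$ with $v_i$ strictly to the left of $v_{i+1}$, and with pairwise disjoint clique ranges $[a(v_i), b(v_i)]$.

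Next, I would define a canonical representative $L(I)$ of the component of $I$ by the following greedy procedure: repeatedly find the smallest $i$ such that the $i$-th token can, through some sequence of slides in $TS_k(G)$, be relocated to a vertex with strictly smaller left endpoint, and perform the sequence that sends it to the leftmost such vertex; stop when no such slide is possible. Termination is immediate since the ordered tuple of left endpoints decreases lexicographically at each step. A diamond-property argument should show that $L(I)$ is well-defined, independent of the scheduling of reductions, and constant on each connected component of $TS_k(G)$. In particular two $k$-independent sets are in the same component if and only if they share the same $L$-value, so $TS_k(G)$ is connected if and only if $L$ takes a single value over all $k$-independent sets.

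The core technical step is to prove that the image of $L$ has polynomial size and can be enumerated in polynomial time. The intuition is that in a canonical form each token is pressed against a structural \emph{wall} coming either from the left end of the interval representation or from an interval that blocks its progress past the previous tokens; such walls are determined by polynomially many local parameters of the interval representation (essentially which interval, if any, spans a given ``pinch point'' between consecutive tokens). This should let me describe the set of canonical forms as the solution set of a polynomially-sized constraint system, and generate them by a left-to-right dynamic program along the clique path. The algorithm then enumerates the candidate canonical tuples, filters those that are genuine (by checking $L(C) = C$), and answers YES iff exactly one survives.

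The main obstacle is exactly this polynomial bound on $|\mathrm{Im}(L)|$: one must argue that the containment of interval vertices, which is the feature distinguishing general interval graphs from the proper interval graphs handled in~\cite{YamadaU16} and which is the source of disconnection in $TS_k(G)$, can only generate polynomially many distinct obstruction patterns. Establishing this requires a careful inductive analysis of how token slides can shuffle tokens across ``fat'' intervals spanning several maximal cliques; the rest of the argument is then a routine combination of the canonical form with the enumeration procedure.
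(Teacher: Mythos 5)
Your plan has two genuine gaps, one of which is fatal as stated. First, the claimed \emph{polynomial bound on the image of $L$} is false in general: an interval graph can have exponentially many connected components in $TS_k(G)$, hence exponentially many canonical representatives. For instance, take $t$ pairwise disjoint blocks, each consisting of intervals $A=[0,1]$, $B=[10,11]$, $A'=[2,3]$, $B'=[8,9]$, $V=[0.5,10.5]$, $V'=[2.5,8.5]$ (shifted so blocks do not meet). In each block both $\{A,B\}$ and $\{A',B'\}$ are frozen (no vertex of either pair has a private neighbor, since every neighbor of one endpoint also covers the other), so for $k=2t$ there are at least $2^t$ frozen, hence isolated, $k$-independent sets, each its own $L$-fixed point. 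So any argument whose core step is ``enumerate all canonical forms and count them'' cannot run in polynomial time; at best you could hope to detect \emph{two} distinct ones quickly, but that is not what you propose and it is not supported by the constraint-system/DP sketch, which is left entirely speculative. Second, even granting well-definedness of $L$ (your diamond-property claim is itself nontrivial, since a single ``reduction'' is a whole sliding sequence, not a local commuting move), you never explain how to \emph{compute} $L(I)$, or even to test $L(C)=C$, in polynomial time: the primitive ``can the $i$-th token be relocated, via some sequence of slides, to a vertex with smaller left endpoint'' is precisely a reconfiguration-reachability question, and assuming it as an oracle is circular. This is where the real work lies.

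For comparison, the paper also uses a canonical target, but a single global one — the greedy leftmost independent set $\ell_1,\ldots,\ell_k$ — and reduces connectivity to the statement that every independent set can be pushed onto it: $TS_k(G)$ is connected iff for every $i$ and every vertex $a$ of $G_{\ell_i}$ one has $w_l(a,k-i,G_{\ell_i})=\ell_{i+1}$. The universal quantification over exponentially many independent sets is handled not by enumeration but by \emph{worst-case indices} $w_r(u,k,H_b)$ and $w_l(u,k,H_b)$ (how far right/left the leftmost token can be pushed in the worst case over all independent sets with leftmost vertex $u$), computed by a recursion on the number of tokens (Procedure~\ref{alg:calculw}, Lemma~\ref{clm:procedure}), whose correctness rests on the fact that tokens cannot permute (Lemma~\ref{lem:nonpermtoken}) and on projection lemmas restricting a sequence to the graph right of a suitable vertex (Lemmas~\ref{lem:totheleft} and~\ref{lem:totheleft2}). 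Some machinery of this kind — a polynomial-time computable quantity summarizing, over all configurations with a given leftmost token, how far that token can be moved — is exactly the missing ingredient your proposal would need both to evaluate your map $L$ and to avoid enumerating components.
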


The proof of Theorem~\ref{th:connec} can be adapted in order to answer a question raised by Demaine et al.~\cite{DemaineDFHIOOUY14} and Yamada and Uehara~\cite{YamadaU16}:

\begin{theorem}\label{th:pairsofinterval}
 Given an interval graph $G$ and two independent sets $I$ and $J$ of size $k$, one can decide in polynomial time if $I$ and $J$ are in the same connected component of $TS_k(G)$.
\end{theorem}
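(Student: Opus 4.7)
The plan is to adapt the structural argument behind Theorem~\ref{th:connec}. A natural way to prove polynomial-time decidability of global connectivity of $TS_k(G)$ is to exhibit a polynomial-time computable canonical representative $\mathrm{can}(S)$ for the connected component of any $k$-independent set $S$: global connectivity then reduces to testing whether all $k$-independent sets share the same canonical form. Granting such a construction, Theorem~\ref{th:pairsofinterval} is immediate: compute $\mathrm{can}(I)$ and $\mathrm{can}(J)$, and output YES iff they coincide.

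Concretely, I would fix an interval representation of $G$ with the intervals sorted by left endpoint, which induces a left-to-right ordering on the tokens of every independent set. A preliminary observation is that this ordering is invariant under sliding: two tokens cannot swap their relative positions, since any exchange would at some intermediate step require them to be on overlapping intervals and violate independence. The tokens of any $k$-independent set are thus canonically labeled $t_1, \ldots, t_k$ from left to right, and each $t_i$ can be tracked throughout any reconfiguration sequence.

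I would then define $\mathrm{can}(S)$ as the lexicographically smallest $k$-independent set reachable from $S$, with the lex order induced by the interval order. A natural procedure computes $\mathrm{can}(S)$ by iterating over $i = 1, \ldots, k$ and, for each $i$, sliding $t_i$ as far left as possible while keeping $t_1, \ldots, t_{i-1}$ fixed and allowing temporary rearrangements of $t_{i+1}, \ldots, t_k$. The local-move toolkit developed for Theorem~\ref{th:connec} should guarantee that this greedy procedure terminates in polynomial time, since each $t_i$ can only be shifted leftward finitely often within a fixed interval representation.

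The main obstacle is proving that $\mathrm{can}(S)$ is indeed an invariant of the connected component, i.e., that the greedy leftward-shift yields the same output for every starting point in a given component. Concretely, one must show that the leftmost admissible position of $t_i$ given the fixed positions of $t_1, \ldots, t_{i-1}$ does not depend on the particular starting positions of $t_{i+1}, \ldots, t_k$ within the component. I expect this to follow from combining the token-order invariance above with the left-to-right linear scaffold of interval graphs and the component-preservation lemmas that underlie the proof of Theorem~\ref{th:connec}; the latter theorem then becomes the special case in which $\mathrm{can}(S)$ is the same for all $k$-independent sets.
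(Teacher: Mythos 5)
Your proposal takes essentially the paper's approach: recursively strip the leftmost token, computing for each prefix the leftmost position (in the $\prec_r$ order) that the leading token can reach, and reduce to a subproblem on the graph $G_a$ to the right of that position. The ``main obstacle'' you correctly isolate is exactly what the paper establishes via Lemma~\ref{lem:totheleft2} and Lemma~\ref{lem:samecomp}, with Procedure~\ref{alg:calculw2} supplying the fixed-point iteration needed to compute $LM(\cdot,H_b)$ for right subsets of $I$ and $J$; your lex-min canonical form is a repackaging of the paper's recursive equality test, so the two are equivalent.
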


In light of the two first results, we ask the following question. The \emph{clique-tree degree} of a chordal graph $G$ is the smallest maximum degree of a clique-tree of $G$.

\begin{question}\label{conj}
For any integers $k,D$, for any chordal graph $G$ of clique-tree degree at most $D$, can the connectivity of $TS_k(G)$ be decided in polynomial time?
\end{question}

\section{Split graphs}\label{sec:split}

This section is devoted to the proof of Theorem~\ref{th:hard}.
A graph $G=(V,E)$ is a \emph{split graph} if the vertices of $G$ can be partitioned into two sets $V_1,V_2$ such that the graph induced by $V_1$ is a clique and the graph induced by $V_2$ is an independent set. There is no restriction on the edges between $V_1$ and $V_2$. One can easily notice that split graphs do not contain any induced cycle of length at least $4$ since two non-adjacent vertices of such a cycle would belong to $V_1$, a contradiction.
Thus split graphs are chordal graphs.

\begin{figure}
 \centering
 \includegraphics[scale=0.6]{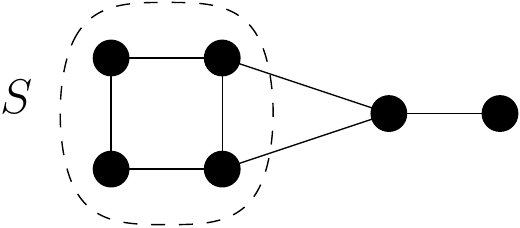}
 \caption{The set $S$ is blocking: no vertex in $N(S)$ is in the neighborhood of precisely one vertex of $S$. Note however that $S$ is not a dominating set.}
 \label{fig:blocking}
\end{figure}

Let $G=(V,E)$ be a graph. The \emph{neighborhood} of a vertex $v$, denoted by $N(v)$ is the set of vertices which are adjacent to $v$. The \emph{closed neighborhood of $v$}, denoted by $N[v]$ is the set $N(v) \cup \{ v \}$. A subset $S$ of vertices is a \emph{dominating set} of $G$ if $V = \cup_{s\in S} N[s]$. Deciding the existence of a dominating set of size $k$ (where $k$ is part of the input) is an NP-complete problem and a $W[2]$-problem when parameterized by $k$.

Let $S$ be a subset of vertices. A vertex $x$ is a \emph{private neighbor} of $s \in S$ if $x \in N(s)$ and $x \notin N[s']$ for every $s' \in S \setminus \{s\}$. We then say that $s$ has a private neighbor. A set $S$ is \emph{blocking} if no vertex of $S$ has a private neighbor with respect to $S$.
A graph $G$ is \emph{$k$-blocking} if it contains a blocking set of size at most $k$. In Figure~\ref{fig:blocking}, the set $S$ is blocking and then the graph is $4$-blocking.
Let us consider the following problem which is a variation of the domination problem:

\begin{quote}
\ds{} \\
\textbf{Input:} An integer $k$, a graph $G$ such that $G$ is not $k$-blocking. \\
\textbf{Output:} YES if and only if $G$ has a dominating set of size $k$.
\end{quote}

\begin{lemma}\label{lem:variantdom}
\ds{} is NP-complete.
\end{lemma}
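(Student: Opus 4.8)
The problem is clearly in NP: a dominating set of size $k$ is a polynomial certificate, and checking that a set dominates the graph is polynomial. (Note that the promise "$G$ is not $k$-blocking" is not something the verifier needs to check — on a YES instance we just exhibit the dominating set.) So the work is in the NP-hardness, which I would obtain by reducing from ordinary \textsc{Dominating Set}, which is NP-complete.

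The naive idea is: given $(G,k)$, output $(G',k')$ where $G'$ is a gadget-augmented version of $G$ that is guaranteed not to be $k'$-blocking, while preserving the answer to the domination question. The obstacle is exactly the promise: we must destroy every blocking set of size $\le k'$ in $G'$ without creating or destroying dominating sets of the relevant size. The cleanest way I know to force a graph to have \emph{no} small blocking set is to attach a pendant vertex (a degree-one vertex) to every vertex: if $v$ has a private pendant leaf $\ell_v$ that is adjacent only to $v$, then in any set $S$ containing $v$, the leaf $\ell_v$ is a private neighbor of $v$ — so no set containing $v$ is blocking, and since this holds for every $v$, \emph{no nonempty set is blocking at all}, hence $G'$ is not $k'$-blocking for any $k'$. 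So the plan is: let $G'$ be $G$ together with, for each $v\in V(G)$, a new pendant vertex $\ell_v$ adjacent only to $v$.

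The issue is that adding pendants changes the domination number: now one must also dominate every $\ell_v$, and a dominating set of $G'$ need not be a dominating set of $G$ (it could use some $\ell_v$ instead of $v$). I would handle this with the standard observation that there is always an optimal dominating set of $G'$ avoiding the leaves: if a dominating set uses $\ell_v$, replacing it by $v$ still dominates (indeed $N[v]\supseteq N[\ell_v]$ in $G'$, and $v$ additionally dominates $\ell_v$), so we may assume $S'\subseteq V(G)$; such an $S'$ dominates all of $G'$ iff it dominates $G$ (the leaves $\ell_v$ are automatically dominated by $v$ once every vertex of $G$ is dominated) — wait, more carefully: $S'\subseteq V(G)$ dominates $\{\ell_v: v\in V(G)\}$ iff $S'\supseteq V(G)$, which is too strong. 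So pendants on \emph{every} vertex over-constrain. I would instead attach, for each $v$, a pendant $P_2$: a path $v - a_v - b_v$, so that $a_v$ is a private neighbor of $v$ in any set containing $v$ (killing blocking sets as before), while $b_v$ forces nothing problematic: $\{a_v,b_v\}$ can be dominated from $v$ (dominates $a_v$) — but $b_v$ still needs a dominator. Hmm. The genuinely clean fix: attach to each $v$ a single pendant leaf $\ell_v$ \emph{and} reason at size $k' = k + |V(G)|$? That changes the parameter badly for later (but this lemma only claims NP-completeness, not W-hardness, so parameter blow-up is fine). Actually the simplest correct construction: make the leaves dominate themselves cheaply is impossible; instead, subdivide. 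Let me commit to: $G'$ = $G$ plus a pendant edge $v\ell_v$ for every $v$, and $k' = k + 1$, together with one extra universal-ish structure — no. I will state the reduction as "add a pendant leaf to each vertex and set $k' = k + |V(G)|$," then argue $G'$ has a dominating set of size $k'=k+|V(G)|$ iff ... no, $V(G)$ alone dominates $G'$, so that's trivially YES always.

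The main obstacle, then, is finding a pendant-type gadget that (a) gives every vertex a private neighbor in every containing set, so $G'$ is never small-blocking, yet (b) does not alter the domination number or the set of dominating sets in a way that breaks the equivalence. I expect the resolution is: attach to each $v\in V(G)$ a \emph{triangle-with-a-pendant} or more simply a path $v-a_v-b_v-c_v$ of length three; then $a_v$ is a private neighbor of $v$ (as it sees only $v$ and $b_v$, and $b_v\notin N[s']$ forces... one must check), killing blocking sets, while $\{a_v,b_v,c_v\}$ needs an extra dedicated dominator $b_v$ in any dominating set, shifting the domination number by exactly $|V(G)|$ uniformly and bijectively: $G$ has a dominating set of size $k$ iff $G'$ has one of size $k + |V(G)|$. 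I would write the forward direction ($S$ dominating in $G \Rightarrow S\cup\{b_v:v\}$ dominating in $G'$) and the backward direction (given $S'$ dominating in $G'$, push tokens off the attached paths onto $v$ and $b_v$, recover $S'\cap V(G)$ as a dominating set of $G$ of size $\le k$), then conclude $G'$ is not $(k+|V(G)|)$-blocking because every vertex — original or gadget — has a private neighbor in any set containing it. The calculations are routine case checks on the path gadget; the design of the gadget to simultaneously satisfy (a) and (b) is the crux.
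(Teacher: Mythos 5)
Your reduction does not go through, and the failure is exactly at the point you yourself flag as the crux: the design of a gadget that kills all small blocking sets. The foundational claim -- that a pendant leaf $\ell_v$ gives $v$ a private neighbor in \emph{every} set containing $v$ -- is false: if $\ell_v\in S$ as well, then $\ell_v\in N[\ell_v]$ with $\ell_v\in S\setminus\{v\}$, so $\ell_v$ is no longer private for $v$, and $\ell_v$ itself causes no trouble since its only neighbor $v$ lies in $N[v]$ with $v\in S$. The same absorption phenomenon defeats the path gadget you finally commit to. Concretely, if $A$ is any blocking set of $G$, then $S=A\cup\{b_v,c_v : v\in A\}$ is blocking in your $G'$: for $v\in A$, each $G$-neighbor of $v$ is covered by $A\setminus\{v\}$ (because $A$ is blocking in $G$) and $a_v$ is covered by $b_v\in S$; for $b_v$, its neighbors $a_v$ and $c_v$ lie in $N[v]$ and $N[c_v]$ respectively, both with the witness in $S$; for $c_v$, its only neighbor $b_v$ lies in $N[b_v]$ with $b_v\in S$. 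This blocking set has size $3|A|$, which fits under your budget $k'=k+n$ as soon as $G$ has a blocking set of size at most $(k+n)/3$ -- and generic \dsc{} instances do (any pair of true twins is already a blocking set of size $2$). So the instance $(G',k')$ you output can violate the promise, and the map is not a valid reduction to \ds{}. The general lesson is that pendant-type gadgets cannot work: whatever private neighbor you hang on $v$, the adversary adds the tail of the gadget to the blocking set and absorbs it, at only a constant multiplicative cost in size.

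The paper's construction is engineered precisely to prevent this lifting, and it rests on two ideas absent from your plan. First, it takes $k$ interconnected copies of $G$, arranged so that a blocking set of size at most $k$ must place either zero or at least two vertices in each copy; by pigeonhole some copy is missed entirely. Second, it attaches to each copy a large family of induced paths on $2k+2$ vertices, long enough that a size-$\le k$ set restricted to a path always leaves some path vertex privately dominated, and wired so that any vertex of another copy is forced to have a private neighbor on the paths of the missed copy. The same copies-plus-paths structure forces a size-$k$ dominating set of $G'$ to pick exactly one vertex per copy, yielding the equivalence with domination in $G$ while keeping the parameter equal to $k$ -- a feature the paper exploits afterwards for co-$W[2]$-hardness, and which your $k'=k+n$ version would lose even if repaired.
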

\begin{proof}
 Checking whether a set is dominating can be performed in polynomial time. Thus \ds{} is in NP.
 In the remaining part of this proof, we argue that \ds{} is NP-complete. Let us show that there exists a polynomial time reduction from \dsc{} to \ds{}.
 
 Let $G=(V,E)$ be a graph with vertex set $\{v_1,\ldots,v_n\}$ and $k$ be an integer. If $k \geq n$ or $k \leq 3$, the problem can be decided in polynomial time. Thus we can assume that $n>k$ and that $k \geq 4$.

 We will construct a new graph $G'$ such that:
  \begin{enumerate}[(i)]
   \item $G'$ has no blocking set of size at most $k$.
   \item $G'$ has a dominating set of size at most $k$ if and only if $G$ has.
  \end{enumerate}
 
 The graph $G'$ is constructed as follows.
 
 \begin{itemize}
  \item Let $H$ be the disjoint union of $k$ copies $G_1,\ldots, G_k$ of $G$. We denote by $v_\ell^i$ the copy of $v_\ell$ in $G_i$. Let $H'$ be the graph obtained from $H$ by adding an edge between any two vertices that are copies of the same vertex or of two adjacent vertices in $G$ (i.e. we add the edges $(v_i^{j},v_i^\ell)$ and $(v_i^j,v_m^\ell)$ for any $i,j,\ell,m$ such that $(v_i,v_m)\in E$).
 \item Let $U$ be the disjoint union of $n$ induced paths on $(2k+2)$ vertices, and for every $i \in \{1,\ldots, n\}$, let $w_i$ be an arbitrary endpoint of the $i^{th}$ path. Let $U'$ be the disjoint union of $k(k-1)$ copies $U_{i,j}$ ($i, j \in \{1,\ldots,k\}$ with $i \neq j$) of $U$. We denote by $w_\ell^{i,j}$ the copy of $w_\ell$ in $U_{i,j}$.
 \item Let $G'$ be the graph obtained from the disjoint union of $H'$ and $U'$ by adding edges as follows. For every $i,j,p$, we add an edge between every vertex of $G_i$ and every vertex of $U_{i,j}$, and between $v_j^i$ and $w_j^{p,i}$. So $w_p^{i,j}$ is adjacent to all the vertices of $G_i$ and one vertex of $G_j$. The other vertices in $U_{i,j}$ are adjacent only to all the vertices of $G_i$.
 \end{itemize}
 Informally, point (i) and the sparsity between the sets $H'$ and $U'$  will ensure a dominating set of $G'$ is (essentially) contained in $H'$ and thus point (ii) is satisfied.
 For (i), the argument will be slightly more complicated. We essentially show that the lengths of the paths ensure that blocking sets must contain vertices in $G'$. The vertices of $G'$ are $k$ copies of the vertices of $G$. One can easily prove that any blocking set contains either zero or at least two vertices in each $G_i$. Thus a blocking set of size at most $k$ does not contain a vertex in each $G_i$, which leads to a contradiction.
 
Note that the graph $G'$ can be constructed in polynomial time.

 \begin{claim}\label{clm:noblocking}
  The graph $G'$ does not contain any blocking set of size at most $k$.
 \end{claim}
 
 \begin{proof}
 Assume by contradiction that $G'$ contains a set $S$ of size at most $k$ that is blocking.
 
 Assume first that $S$ contains a vertex $s$ of $U_{p,l}$ for some $p \neq l$ and $S$ does not contain any vertex of $G_p$.
 Let $P$ be a path of $U_{p,l}$ containing $s$. Since $|S| \leq k$ and the path $P$ has $(2k+2)$ vertices, there exist two consecutive vertices $\{a,b\}$ of $P$ which are not in $S$. Let $s'$ be the vertex of $S \cap P$ that is the closest from $\{ a,b \}$. Let $Q$ be a shortest path from $s'$ to $\{a,b\}$. Let us denote by $x$ the second vertex of $Q$. By construction, the vertex $x$ is in $N(s')$ and is not in the neighborhood of any vertex of $S \cap P$ distinct from $s'$. Indeed the other neighbor of $x$ in $P$ is either in $Q$ or in $\{a,b\}$; by definition no vertex of $Q \cup \{a,b\}$ is in $S$.
 Moreover, the vertex $x$ is not the beginning of $P$. Thus, no vertex of $S \setminus P$ is incident to $x$. Indeed, only vertices of $G_p$ are incident to vertices of $P$ that are not beginning of paths, and by assumption $S$ does not contain any vertex of $G_p$. Thus $s'$ has a private neighbor, a contradiction.
 
 Thus the set $S$ must contain at least one vertex $s$ in $\cup_{i=1}^k G_i$.
 Let $p$ be the index such that $s \in G_p$. Since $s$ is incident to all the vertices of $U_{p,l}$ and any vertex $x$ which is not in $G_p$ contains at most $3$ vertices of $U_{p,l}$ in its close neighborhood ($1$ if $x \in G_j$ with $j \neq p$, $3$ if $x \in U_{p,l}$ and $0$ otherwise), $s$ has a private neighbor if $|S \cap V(G_p)| = 1$. So we can assume that $S$ contains at least two vertices of $G_p$.
 Since $S$ has at most $k$ vertices and at least $2$ vertices of $S$ are in $G_p$, there exists an integer $i \leq k$ such that no vertex of $S$ is contained in  $G_i \cup_{l \neq i} U_{i,l}$.

 The vertex $s$ is incident to a vertex $u$ of $U_{i,p}$. Moreover, as we already mentionned, the vertex $u$ is a beginning of a path of $U_{i,p}$ and $x$ is the unique neighbor of $u$ which is not in $G_i \cup U_{i,p}$. Thus $u$ is a private neighbor of $s$, contradicting the definition of~$S$.
\end{proof}

\begin{claim}\label{clm:equivdom}
The graph $G$ has a dominating set of size $k$ if and only if $G'$ has a dominating set of size $k$.
\end{claim}
\begin{proof}
 Assume that $G$ has a dominating set $W=\{w_1,\ldots,w_k\}$ of size $k$. Let us prove that $W'=\{ w_1^1,\ldots,w_k^k \}$ is a dominating set of $G'$. Let $v_i^j$ be a vertex of $G_j$. Since $W$ is a dominating set, there exists a vertex $w_l$ of $W$ such that $(v_i,w_l)$ is an edge. By definition of $G'$, $(v_i^j,w_l^l)$ is an edge and then $v_i^j$ is dominated. Let $u$ be a vertex of $U_{p,l}$. By definition of $W'$, the vertex $w_p^p$ which is in $W'$ is completely connected to $U_{p,l}$ and then $u$ is dominated. Thus $W'$ is a dominating set of $G'$.
 
 Assume now that $G'$ has a dominating set $W'$ of size $k$. Then $W'$ contains exactly one vertex in each $G_p$. Indeed, assume by contradiction that $W'$ does not contain any vertex of $G_p$. Then every vertex of $W'$ contains at most $3$ vertices of $\cup_{l \neq p} U_{p,l}$ in its closed neighborhood (actually at most one if $W' \notin \cup_{l \neq p} U_{p,l}$ and $3$ otherwise). Since $U_{p,l}$ has size at least $|V| \cdot (2k+2) \geq k (2k+2)$, $S$ does not dominate $\cup_{l \neq p} U_{p,l}$. Thus $W'$ contains at least one vertex in each $G_p$ for $p\leq k$. Since $W'$ has size $k$, $W'$ contains exactly one vertex in each $G_p$.
 
 Let $v_{i_1}^1,v_{i_2}^2,\ldots,v_{i_k}^k$ be the vertices of $W'$. We claim that $W=\{ v_{i_1},v_{i_2},\ldots,v_{i_k} \}$ is a dominating set of $G$. Let $v$ be a vertex of $G$. Since $W'$ is a dominating of $G'$, there exists a $j$ such that $v_{i_j}^j$ is incident to $v^1$. By definition of $G'$, $(v_{i_j},v)$ is an edge, and then $v$ is dominated by $W$.
 \end{proof}
 The combination of Claims~\ref{clm:noblocking} and~\ref{clm:equivdom} ensure that Lemma~\ref{lem:variantdom} holds.
 \end{proof}

A $k$-independent set $I$ is \emph{frozen} in $TS_k(G)$ if $I$ is an isolated vertex of $TS_k(G)$. In other words, no token of $I$ can be slided. Note that $I$ is frozen if and only if $I$ is blocking. Indeed, if a vertex $u$ of $I$ has a private neighbor $v$ then the independent set $I \cup v\setminus u$ is incident to $I$ in $TS_k(G)$: the token of $u$ can be slided on $v$. Conversely, if $I$ and $J$ are incident in $TS_k(G)$ then a vertex of $I$ has a private neighbor. Indeed, assume that $J$ is incident to $I$ and let $u= I \setminus J$ and $v = J \setminus I$. Then $(u,v)$ is an edge. Since $J$ is an independent set, no vertex of $I \setminus u$ is incident to $v$, and then $v$ is a private neighbor of $u$.

\begin{theorem}\label{thm:conp}
 \tss{} is co-NP-hard.
\end{theorem}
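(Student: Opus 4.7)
The plan is to reduce \ds{} to the complement of \tss{}. Given an instance $(G,k)$ of \ds{}, obtained from the construction in the proof of Lemma~\ref{lem:variantdom} so that in particular $G$ has no blocking set of size at most $2k-1$ (hence none of size $k+1$), I build a split graph $G^* = (V_1 \cup V_2, E^*)$ with clique side $V_1 := V(G) = \{v_1,\dots,v_n\}$ and independent side $V_2 := \{u_1,\dots,u_n\} \cup \{x\}$, with edges $u_i v_j \in E^*$ iff $v_j \in N_G[v_i]$ and $x$ adjacent to every vertex of $V_1$. Setting $k^* := k+1$, I will show that $TS_{k+1}(G^*)$ is disconnected iff $G$ has a dominating set of size $k$.

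For the forward direction, a dominating set $D=\{v_{i_1},\dots,v_{i_k}\}$ of $G$ gives the $(k+1)$-independent set $S^* := \{x, u_{i_1}, \dots, u_{i_k}\}$ of $G^*$, which is frozen: every $v_j \in V_1$ is a neighbor of some $u_{i_\ell}$ by domination, so $x$ has no private neighbor, and every neighbor of each $u_{i_\ell}$ in $V_1$ is adjacent to $x \in S^*$, so $u_{i_\ell}$ has no private neighbor either. Since many other $(k+1)$-independent sets exist (for instance, any $(k+1)$-subset of $\{u_1,\dots,u_n\}$), $S^*$ is isolated in $TS_{k+1}(G^*)$, proving disconnection. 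Conversely, assuming $G$ has no dominating $k$-set, I first check that no $(k+1)$-independent set of $G^*$ is frozen, case-splitting on $|S \cap V_1|$: if $S = \{x\} \cup U$ with $U \subseteq \{u_i\}$ and $|U|=k$, then some $v_j$ not dominated by $W_U := \{v_i : u_i \in U\}$ is a private neighbor of $x$ (such $v_j$ exists by assumption); if $S = U \subseteq \{u_i\}$ of size $k+1$, then frozenness would force $W_U$ to be a blocking set of size $k+1$ in $G$, ruled out by the hypothesis on $G$; if $S = \{v_p\} \cup S'$ with $v_p \in V_1$, then $x$ is a private neighbor of $v_p$ (indeed $xv_p \in E^*$ forces $x \notin S'$, and $x$'s other neighbors lie in $V_1$ while $S' \subseteq V_2$, so $x \notin N[S']$).

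The main obstacle is to upgrade ``no frozen set'' to ``$TS_{k+1}(G^*)$ is connected'', via a routing argument. I plan to show that from any non-frozen $(k+1)$-independent set one can reach the canonical set $\{x, u_1, \dots, u_k\}$ (non-frozen precisely because $\{v_1, \dots, v_k\}$ is not dominating). The key primitive moves are the $x \leftrightarrow v_p$ slides that convert between sets of the form $\{x\} \cup U$ and $\{v_p\} \cup U$ whenever $v_p \in V_1$ is not dominated by $W_U$, and the $u_i \to v_j$ slides that convert sets $U \subseteq \{u_i\}$ of size $k+1$ into $\{v_j\} \cup (U \setminus \{u_i\})$ whenever $v_j$ is a private neighbor of $v_i$ in $W_U$. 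The technical core is to verify that any sequence of index swaps among the $u$'s can be realized by chaining such moves while keeping every intermediate set independent—this ultimately reduces to re-using the no-dominating-$k$-set hypothesis at each intermediate configuration to produce a suitable $V_1$-pivot through which the next swap can be routed.
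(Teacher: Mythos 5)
Your split graph is essentially the paper's construction with the matching gadget deleted. The paper's clique has $n+k+1$ vertices, not $n$: in addition to $u_1,\ldots,u_n$ (your $V_1$) it contains $u_{n+1},\ldots,u_{n+k+1}$, each of which has a unique private pendant $w_{n+j}$ on the independent side, and the independent side has the universal-to-$K_1$ vertex $w_{n+k+2}$ (your $x$). That matching is not cosmetic. It creates a target set $I=\{w_{n+1},\ldots,w_{n+k+1}\}$ whose ``projection'' into $G$ is empty, so it is never frozen, and whose tokens can always be parked and retrieved via the clique because each $u_{n+j}$ has $w_{n+j}$ as its only independent-side neighbor. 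The paper's proof of connectivity is then a one-line potential argument: from any $J\neq I$, strictly increase $|J\cap\{w_{n+1},\ldots,w_{n+k+1}\}|$. That invariant collapses the entire ``routing'' into three short cases.

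By dropping the gadget, you have moved the whole difficulty of the theorem into your final paragraph, which you openly label a plan rather than a proof. That paragraph is the gap. Absence of isolated vertices in $TS_{k+1}(G^*)$ does not imply $TS_{k+1}(G^*)$ is connected, and you give no invariant, no potential function, and no argument that rules out the reconfiguration graph splitting into several non-trivial components. Concretely, a single ``swap'' $\{x\}\cup U \to \{x\}\cup ((U\cup u_j)\setminus u_m)$ requires chaining $x\to v_p\to u_j$ with $v_p$ undominated by $W_U$ and $v_j\in N_G[v_p]$, and then evicting some $u_m$ to a private pivot; the admissible pairs $(j,m)$ are tightly constrained by the structure of $G$ at $U$, and you have no control that they let you home in on a fixed target such as $\{x,u_1,\ldots,u_k\}$. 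Re-invoking ``no dominating $k$-set'' at each intermediate configuration, as you propose, gives you \emph{a} move, not the move you need. Until you can prove a monotone progress measure (or add back a gadget that provides one), the ``backward'' direction is not established.

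A secondary, more repairable issue: you quote Lemma~\ref{lem:variantdom} as guaranteeing ``no blocking set of size at most $2k-1$'', but the lemma as written only establishes the bound $k$, and its path-length parameter $2k+2$ is chosen precisely for the argument ``there exist two consecutive vertices of $P$ not in $S$'' to go through with $|S|\le k$. For $|S|=k+1$ that step can fail, and one has to fall back on path endpoints, which the written proof does not do. So ``hence none of size $k+1$'' is not a quotable consequence; you would need to either lengthen the paths in the Lemma~\ref{lem:variantdom} gadget or re-derive the bound you actually use.
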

\begin{proof}
 Let us prove that there is a reduction from \ds{} to \tss{} such that the first instance if positive if and only if the second one is negative. Since \ds{} is NP-complete by Lemma~\ref{lem:variantdom}, it implies that \tss{} is co-NP hard.
 
 \begin{figure}
  \centering
  \includegraphics[scale=0.75]{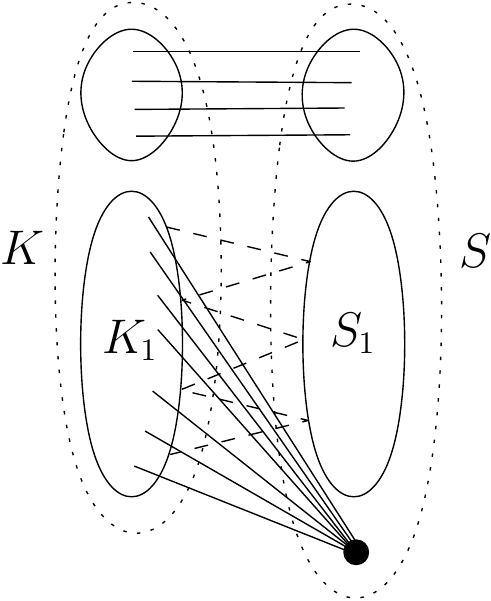}
  \caption{$K$ is the clique and $S$ is the independent set. The dashed part corresponds to the edges of the graph $G$. Then we add $k+1$ vertices on each side (left one inducing a clique) with a matching between the two sides. Finally the bottommost vertex of $S$ is the vertex $w_{n+k+2}$ which is connected to all the vertices of $K_1$.}
 \end{figure}

 Let $G$ be a graph on vertex set $\{v_1,\ldots,v_n \}$ that is not $k$-blocking. We construct a split graph $G'$ on vertex set $K \cup S$ where $K$ induces clique of size $n+k+1$ and $S$ induces an independent set of size $n+k+2$. The vertices of $K$ are denoted by $u_1,\ldots,u_{n+k+1}$ and the vertices of $S$ are denoted by $w_1,\ldots,w_{n+k+2}$. Moreover we will denote by $K_1$ (resp. $S_1$) the subset of $K$ (resp. $S$) $u_1,\ldots,u_{n}$ (resp. $w_1,\ldots,w_{n}$). The edges of the graph $G'$ are the following:
 \begin{itemize}
  \item For every $i,j \leq n$, $u_i$ is incident to $w_j$ if $i=j$ or $(v_i,v_j)$ is an edge of $G$. Thus the graph induced by $(K_1,S_1)$ simulates the incidence in the graph $G$.
  \item For every $n+k+2>j >n$, $w_j$ is the unique neighbor of $u_j$ in the independent set.
  \item The vertex $w_{n+k+2}$ is connected to the whole set $u_1,\ldots,u_n$.
 \end{itemize}
Let us prove that the $TS_{k+1}(G')$ is connected if and only if $G$ has no dominating set of size $k$.
First assume that $G$ has a dominating set $D$ of size $k$. Without loss of generality, we can assume that $D=\{v_1,\ldots,v_k\}$. The set $I=\{w_1,\ldots,w_k,w_{n+k+2}\}$ is an independent set of $G'$ of size $k+1$. Note that the independent set $I$ is frozen. Indeed, we have $N(I)= \{u_1,\ldots,u_n\}$. Moreover, since $D$ is a dominating set, for every vertex $u_i$, there exists a vertex of $\{ w_1,\ldots,w_k\}$ which is incident to $u_i$. Since $w_{n+k+2}$ is also incident to $u_i$, no vertex of $I$ has a private neighbor. And then no vertex of $I$ can be slided, \emph{i.e.} $I$ is the unique independent set of its own connected component in $TS_{k+1}(G')$. Since there are other independent sets of size at least $(k+1)$, $TS_{k+1}(G')$ is not connected. \smallskip

Assume now that there is no dominating set of size at most $k$. Let us prove that any independent set $J$ of size $k+1$ can be transformed into the independent set $I=\{w_{n+1},\ldots,w_{n+k+1}\}$. In order to prove it, let us argue that if $J \neq I$ then $J$ can be transformed into an independent set $J'$ such that $|J' \cap \{w_{n+1},\ldots,w_{n+k+1}\} | > |J \cap \{w_{n+1},\ldots,w_{n+k+1}\}$.

First assume that the vertex $w_{n+k+2}$ is in $J$.
Since the graph $G$ has no dominating set of size at most $k$, the set $J \setminus w_{n+k+2}$ does not dominate $\{u_1,\ldots,u_n\}$, thus $w_{n+k+2}$ has a private neighbor $u_i$. We can slide $w$ to $u_i$, then $u_i$ to some vertex $u_j$ for some $j>n$ such that $w_j$ is not in the independent set $J$, and finally move $u_j$ to $w_j$. At any step, we have an independent set and we have obtained the target independent set $J'$.

If $w_{n+k+2} \notin J$ and $J \cap \{w_1,\ldots,w_n\} \neq \emptyset$, we can slide a vertex of $J \cap \{w_1,\ldots,w_n\}$ to a vertex in $\{w_{n+1},\ldots,w_{n+k+1}\}$. Since $L=J \cap \{w_1,\ldots,w_n\}$ has size at most $k$, it is not blocking in $G$ thus neither in $G'$ by construction. Thus a vertex $w_i$ of $L$ has a private neighbor $u_j$ in $\{ u_1,\ldots,u_n\}$. So we can slide $w_i$ on $u_j$ and conclude as in the previous case.

Consider the final case, i.e. $J \cap \{w_1,\ldots,w_n\}=\emptyset$ and $w_{n+k+2} \not\in J$.  Since $J \neq I$, w.l.o.g. $J=\{u_1,w_{n+2},\ldots,w_{n+k+1}\}$. We can slide the token of $u_1$ to $u_{n+1}$ and then onto $w_{n+1}$, hence the conclusion.
\end{proof}

One can notice that, in the proof of Lemma~\ref{lem:variantdom}, the size of the dominating set does not change during the reduction. Moreover, in Theorem~\ref{thm:conp}, the difference between the size of the dominating set of $G$ in the original graph and the size of the independent sets we want to slide in the split graph is one. As a by-product, we immediately obtain the following corollary:

\begin{corollary}
 \tss{} is co-W[2]-hard.
\end{corollary}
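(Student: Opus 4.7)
The plan is to observe that the two reductions constructed earlier in this section, when composed, already form a parameterized reduction from \dsc{} to the complement of \tss{}. Since \dsc{} is W[2]-hard when parameterized by the solution size, this composition will immediately yield co-W[2]-hardness of \tss{} parameterized by the size of the independent set.

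First, I would verify that the reduction of Lemma~\ref{lem:variantdom} is a parameterized reduction from \dsc{} to \ds{}. The target instance $(G',k)$ uses the same parameter $k$ as the source instance $(G,k)$: Claim~\ref{clm:equivdom} shows that $G$ has a dominating set of size $k$ if and only if $G'$ does, and the parameter is literally preserved. The size of $G'$ is polynomial in $|V(G)|$ and $k$, so the construction runs in time $f(k)\cdot n^{O(1)}$ as required.

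Second, I would verify that the reduction of Theorem~\ref{thm:conp} is a parameterized reduction from \ds{} to the complement of \tss{}. From an instance $(G,k)$ of \ds{} we build a split graph $G'$ on $O(n+k)$ vertices, and the question is about $TS_{k+1}(G')$. Thus the new parameter is $k+1$, a computable (linear) function of the old parameter, which is admissible. Crucially, Theorem~\ref{thm:conp} shows that $G$ has a dominating set of size $k$ if and only if $TS_{k+1}(G')$ is disconnected, so YES-instances of \ds{} correspond exactly to NO-instances of \tss{} and vice versa.

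Composing these two reductions yields a parameterized reduction from \dsc{} with parameter $k$ to the complement of \tss{} with parameter $k+1$. Since \dsc{} is W[2]-hard under this parameterization, the complement of \tss{} is W[2]-hard parameterized by the independent set size, i.e., \tss{} is co-W[2]-hard. No genuine obstacle is expected: the authors already flagged in the paragraph preceding the statement that both reductions preserve the parameter up to an additive constant, so the argument essentially amounts to assembling these observations into the standard definition of a parameterized reduction.
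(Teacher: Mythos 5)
Your proposal is correct and mirrors the paper's own argument: the authors derive the corollary exactly by observing that the reduction in Lemma~\ref{lem:variantdom} keeps the parameter $k$ unchanged and the reduction in Theorem~\ref{thm:conp} changes it only by an additive constant (from $k$ to $k+1$), so the composition is a parameterized reduction from the W[2]-hard \dsc{} to the complement of \tss{}. You simply make the standard definition-checking explicit, which the paper leaves implicit.
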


\section{Interval graphs}

This section is devoted to the proof of Theorem~\ref{th:connec}.

\begin{figure}
 \centering
 \includegraphics[scale=0.65]{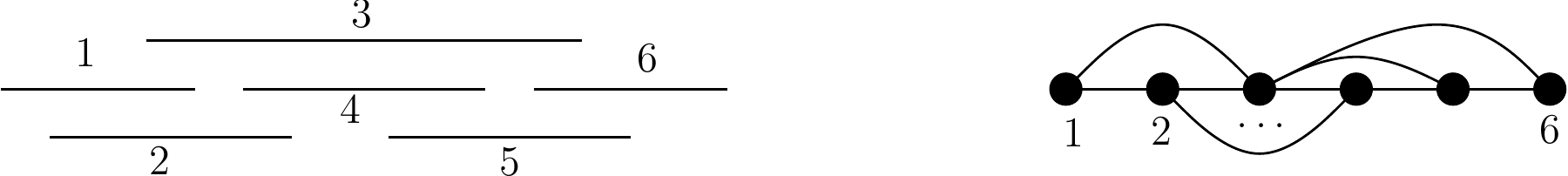}
 \caption{An interval graph on $6$ vertices with its representation.}
 \label{fig:interval}
\end{figure}

A graph $G$ is an \emph{interval graph} if $G$ can be represented as an intersection of segments on  the line (see Figure~\ref{fig:interval}). More formally, each vertex can be represented with a pair $(a,b)$ (where $a\leq b$) and vertices $u=(a,b)$ and $v=(c,d)$ are adjacent if the intervals $(a,b)$ and $(c,d)$ intersect.
Let $u=(a,b)$ be  a vertex; $a$ is the \emph{left extremity} of $u$ and $b$ the \emph{right extremity} of $u$. The left and right extremities of $u$ are denoted by respectively $l(u)$ and $r(u)$. Given an interval graph, a representation of this graph as the intersection of intervals in the plane can be found in $\mathcal{O}(|V|+|E|)$ time by ordering the maximal cliques of $G$ (see for instance~\cite{Booth76}). Actually, interval graphs admit clique paths and are thus a special case of chordal graphs.
Using small perturbations, we can moreover assume that all the intervals start and end at distinct points of the line. In the remaining of this section we assume that we are given a representation of the interval graph on the real line.

\subsection{Basic facts on independent sets in interval graphs}

\paragraph*{Leftmost independent set.}
We start with additional useful notions on interval graphs.
Let $G=(V,E)$ be an interval graph with its representation on the line. There are two natural orders on the vertices of an interval graph: the \emph{left order} denoted by $\prec_l$ and the \emph{right order} denoted by $\prec_r$. We have $u \prec_l v$ if and only if $l(u) \leq l(v)$ (note that by our small perturbations assumption, we never have $l(u)=l(v)$). Similarly, $u \prec_r v$ if $r(u) \leq r(v)$. Note that these two orders do not necessarily coincide.

Let $u$ be a vertex of $G$. The graph $G_u$ is the graph induced by all the vertices $v$ such that $l(v) > r(u)$. In other words, $G_u$ is the graph induced by the intervals located at the right of $u$ that do not intersect the interval $u$ (see Figure~\ref{fig:interval2} for an illustration). Similarly, $G^u$ is the graph induced by all the vertices $v$ such that $r(v) < l(u)$ (the graph induced by the intervals located at the left of $u$ that do not intersect the interval $u$).
The graph $G_u^w$ is the graph induced by all the vertices $v$ where $u \prec_l v \prec_l w$ and both $(u,v)$ and $(v,w)$ are not edges. In other words, $G_u^w$ is the graph induced by the intervals located between $u$ and $w$. We can alternatively define $G_u^w$ as the graph induced by the vertices both in $G_u$ and $G^w$.

\begin{figure}
    \centering
    \includegraphics[scale=0.65]{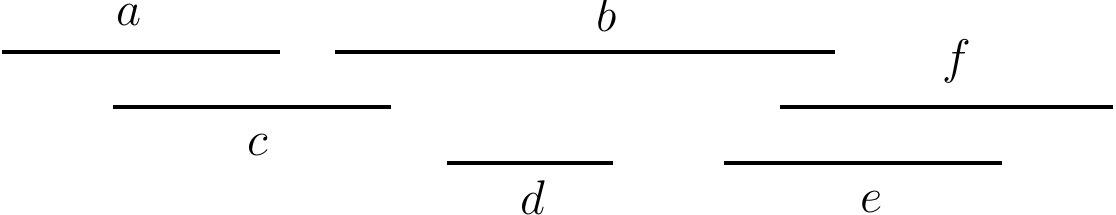}
    \caption{The leftmost independent set is $\{a,d,e\}$ while the rightmost one is $\{c,d,f\}$. The graph $G_c$ is restricted to the vertices $d,e,f$ and $G_c^e$ is restricted to the single vertex $d$.}
    \label{fig:interval2}
\end{figure}

The \emph{$\prec_r$-leftmost independent set} of $G$ is the independent set which can be recursively defined as follows:
\begin{itemize}
 \item If $G$ has no vertex, it is the empty set.
 \item Otherwise, let $u$ be the minimum vertex for the order $\prec_r$, \emph{i.e.}  the vertex $u$ that minimizes $r(u)$. The leftmost independent is $u$ plus the leftmost independent set of $G_u$.
\end{itemize}
One can similarly define the \emph{$\prec_l$-rightmost independent set} as the independent set which can be recursively defined as follows:
\begin{itemize}
 \item If $G$ has no vertex, it is the empty set.
 \item Otherwise, let $u$ be the maximum vertex for the order $\prec_l$, \emph{i.e.}  the vertex $u$ that maximizes $l(u)$. The rightmost independent is $u$ plus the rightmost independent set of $G^u$.
\end{itemize}

Note that if we reverse the order on the real line, the $\prec_r$-leftmost independent set becomes the $\prec_l$-rightmost independent. Thus all the following statements remain correct by replacing $\prec_r$-leftmost with $\prec_l$-rightmost independent sets. In order to avoid cumbersome notations and since we only consider $\prec_r$-leftmost independent sets and $\prec_l$-rightmost independent sets, we omit the $\prec_r$ and $\prec_l$ when we deal with them.

Given an independent set $u_1,\ldots,u_k$, if $l(u_i) < l(u_j)$ then $r(u_i) <l(u_j)$. So there is a natural order on the vertices of an independent set which corresponds to both $\prec_l$ and $\prec_r$. In the following, we say that the independent set is \emph{ordered} if $r(u_i) < l(u_{i+1})$ for every $i \leq k-1$. Moreover, by abuse of notation, when we  say that $l_1,\ldots,l_p$ is the leftmost independent set, we assume that the vertices are ordered.
Let us now state a couple of simple lemmas.

\begin{lemma}\label{lem:left}
Let $G$ be an interval graph and $l_1,\ldots,l_p$ be the leftmost independent set of $G$. Let $I=\{u_1,\ldots,u_k\}$ be an ordered independent set of size $k$. Then for every $j<k$, $r(u_j) \geq r(l_j)$.
\end{lemma}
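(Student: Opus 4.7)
The plan is to prove the statement by induction on $j$, exploiting the greedy definition of the leftmost independent set. This is a textbook \emph{greedy stays ahead} argument: at each step, the leftmost greedy choice $l_j$ is, by construction, the vertex whose right extremity is smallest among those compatible with the previous greedy choices, so it must beat anything an arbitrary ordered independent set could put in position $j$.

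Concretely, for the base case $j=1$, I would simply invoke the definition: $l_1$ is chosen to minimize $r$ over all of $V(G)$, and $u_1$ is one candidate, so $r(l_1) \leq r(u_1)$. For the inductive step, assume $r(u_{j-1}) \geq r(l_{j-1})$. Since $I = \{u_1, \ldots, u_k\}$ is ordered, the definition of ordered independent set gives $l(u_j) > r(u_{j-1})$, and combined with the inductive hypothesis this yields $l(u_j) > r(l_{j-1})$. Thus $u_j$ is a vertex of the subgraph $G_{l_{j-1}}$. By definition, $l_j$ is the vertex of $G_{l_{j-1}}$ with minimum right extremity, so $r(l_j) \leq r(u_j)$, which closes the induction.

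I do not expect any real obstacle here: the entire lemma is essentially just unfolding the recursive definition of the leftmost independent set and the meaning of ``ordered''. The only subtle point to check is that $l_j$ is actually well-defined whenever $u_j$ exists, i.e.\ that $p \geq k$; but this is immediate because the greedy procedure producing $l_1,\ldots,l_p$ is well-known to yield a maximum independent set on an interval graph, so $p$ is at least the size of any independent set, including $I$.
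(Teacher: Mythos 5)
Your proof is correct and is essentially the same \emph{greedy stays ahead} argument as the paper's, merely phrased as a direct induction rather than as a minimal counterexample. One small caveat: your closing appeal to the ``well-known'' maximality of the greedy independent set to guarantee that $l_j$ exists would be circular in this paper, since that maximality is Lemma~\ref{lem:leftmost}, which is proved \emph{using} the present lemma; fortunately you do not actually need it, because your own inductive step already exhibits $u_j$ as a vertex of $G_{l_{j-1}}$, so that graph is nonempty and $l_j$ is automatically defined.
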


\begin{proof}
 By contradiction. Let $j$ be the first index such that $r(u_j) < r(l_j)$. Since $r(u_{j-1}) \geq r(l_{j-1})$ (if it exists), the vertex $u_j$ is not incident to $l_{j-1}$ (since $l(u_j)>r(u_{j-1}) \geq r(l_{j-1})$). Thus $u_j$ is a vertex of $G[V \setminus N(\cup_{i=1}^{j-1} l_i)]$, contradicting the minimality of the right extremity of $l_j$.
\end{proof}

For every graph $G$ let us denote by $\alpha(G)$ the maximum size of an independent set of $G$. The value $\alpha(G)$ is called the \emph{independence number} of $G$.

\begin{lemma}\label{lem:leftmost}
 The leftmost independent set of $G$ has size $\alpha(G)$.
\end{lemma}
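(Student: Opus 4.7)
The plan is to deduce this directly from Lemma~\ref{lem:left} via a short exchange-style argument. Let $l_1,\ldots,l_p$ denote the leftmost independent set of $G$ and let $I=\{u_1,\ldots,u_k\}$ be any ordered independent set. I want to show $k\le p$; since the leftmost independent set is itself independent of size $p$, this will give $\alpha(G)=p$.

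First, I would invoke Lemma~\ref{lem:left}: for every $j<k$, $r(u_j)\ge r(l_j)$. Now suppose for contradiction that $k>p$. Then $j=p$ satisfies $j<k$, so the lemma applies at index $p$ and yields $r(u_p)\ge r(l_p)$. Since $k\ge p+1$, the vertex $u_{p+1}$ exists in $I$, and because $I$ is ordered we have $l(u_{p+1})>r(u_p)\ge r(l_p)$. By definition of $G_{l_p}$ this means $u_{p+1}$ is a vertex of $G_{l_p}$.

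On the other hand, the recursive construction of the leftmost independent set halted after selecting $l_p$, which means $G_{l_p}$ is the empty graph — otherwise the recursion would have produced a $(p{+}1)$-th vertex. This contradicts $u_{p+1}\in V(G_{l_p})$, so $k\le p$ as desired, and the lemma follows.

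There is no real obstacle here: the work has already been done in Lemma~\ref{lem:left}, and the only point to handle carefully is the index range in its statement (the inequality is guaranteed for $j<k$, so applying it at $j=p$ requires the assumption $p<k$ that is being contradicted). The argument is essentially the textbook proof that the earliest-finish-time greedy produces a maximum set of pairwise non-overlapping intervals, adapted to the notation of the paper.
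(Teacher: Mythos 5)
Your proof is correct and rests on the same two ingredients as the paper's own argument: Lemma~\ref{lem:left}, and the fact that the greedy construction terminates only when $G_{l_p}$ is empty. The presentation is slightly different — you apply Lemma~\ref{lem:left} at index $p$ first and then observe that $u_{p+1}$ would live in the empty graph $G_{l_p}$, whereas the paper starts from the observation that $u_{\alpha(G)}$ cannot lie in $G_{l_p}$ and cascades the inequality $r(u_j)<r(l_p)$ down to $j=p$, contradicting Lemma~\ref{lem:left} at the end — but these are the same argument read in opposite directions. If anything, your version is cleaner: the paper's phrasing ``$u_{\alpha(G)}$ must be incident to $l_p$'' is slightly imprecise (it could equal $l_p$, or fail to touch $l_p$ while still not belonging to $G_{l_p}$), while your formulation ``$u_{p+1}\in V(G_{l_p})$, but $G_{l_p}$ is empty'' captures exactly the relevant condition. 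You are also right to flag the index range in Lemma~\ref{lem:left}: applying it at $j=p$ is only legitimate under the contradiction hypothesis $p<k$, and you handle that point correctly.
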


\begin{proof}
 By contradiction. Let $I=\{u_1,\ldots,u_{\alpha(G)} \}$ be an ordered independent set of size $\alpha(G)$. Let $J=\{ l_1,\ldots,l_p \}$ be the leftmost independent set of $G$ with $p<\alpha(G)$. Note that $u_{\alpha(G)}$ must be incident to $v_p$ otherwise $G[V \setminus N(\cup_{i=1}^{p} v_i)]$ is not empty. Thus $u_{\alpha(G)-1}$, and then $u_p$ satisfies $r(u_p) < r(v_p)$, a contradiction with Lemma~\ref{lem:left}.
\end{proof}

Lemma~\ref{lem:leftmost} ensures that (i) $TS_k(G)$ is empty if $k$ is larger than the size of the leftmost independent set (ii) $TS_k(G)$ is connected iff one can transform any independent set of size $k$ into the $k$ first vertices of the leftmost independent set. Our proof technique will precisely be based on point (ii).

\paragraph*{Token moves.}
We can then define the \emph{leftmost vertex} of an independent set $I$ as the minimum vertex of $I$ for both $\prec_l$ and $\prec_r$.

Before stating the next lemma, let us first make an observation on the representation of an independent set reconfiguration. We can represent a reconfiguration sequence as a sequence of independent sets $I_1,\ldots,I_m$ where $|I_i\setminus I_{i+1}|= |I_{i+1}\setminus I_i|=1$ and the unique vertex of $I_i\setminus I_{i+1}$ is incident to the unique vertex of $I_{i+1}\setminus I_i$. Let us denote by $u$ the unique vertex in $I_i\setminus I_{i+1}$ and by $v$ the unique vertex in $I_{i+1}\setminus I_i$. Thus the adjacency between these two independent sets can also be represented as the edge $(u,v)$. We say that a token is \emph{slided from $u$ to $v$} and that $(u,v)$ is \emph{the move} from $I_i$ to $I_{i+1}$. Thus a reconfiguration of independent set can be seen either as a sequence of adjacent independent sets in $TS_k(G)$ or as a sequence of moves.

Let $I_0$ be an independent set and $u \in I_0$. Let $I_0,\ldots,I_\ell$ be a reconfiguration sequence of independent sets. We define the \emph{token of origin $u$} in $I_k$ as follows: the token of origin $u$ in $I_0$ is $u$. For $t$ the token of origin $u$ in $I_{k-1}$, the token of origin $u$ in $I_k$ is either $t$ (if $t \in I_k$) or the vertex $v$ such that $(t,v)$ is the move from $I_{k-1}$ to $I_k$ (if $t \not\in I_k$).

On an interval graph, the \emph{leftmost token} of $I$ is the token on the leftmost vertex of $I$. More generally, there is a natural order on the tokens on vertices of $I$, as there is a natural order on the vertices themselves. The next lemma essentially ensures that the order is preserved by token sliding, i.e. we cannot ``permute'' tokens during a reconfiguration of an interval graph.

\begin{lemma}\label{lem:nonpermtoken}
Let $G$ be an interval graph and $I,J$ be two independent sets of $G$. Let $u$ be the leftmost vertex of $I$. For any reconfiguration from $I$ to $J$, the leftmost token of $J$ is the token of origin $u$.
\end{lemma}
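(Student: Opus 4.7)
The plan is to prove by induction on the length of the reconfiguration sequence the stronger statement that no single sliding move can change the relative left-to-right order of the tokens. Granting this, the token of origin $u$ remains the leftmost token at every step and is therefore the leftmost token of $J$.

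The base case is immediate. For the inductive step, consider a single move taking an independent set $I'$ to $I''$ by sliding a token from $t$ to $v$ along an edge $(t,v)$. Order the tokens of $I'$ as $t_1 \prec_l t_2 \prec_l \dots \prec_l t_k$ (which coincides with the $\prec_r$-order since $I'$ is independent), and let $t = t_i$. I will show that $t_{i-1} \prec_l v \prec_l t_{i+1}$ whenever these neighbors exist, so that the ordered list of tokens in $I''$ is exactly $t_1, \dots, t_{i-1}, v, t_{i+1}, \dots, t_k$, i.e.\ the position of each token in the order is preserved.

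The key is to combine two pieces of information. Since $(t_i, v)$ is an edge, the intervals $t_i$ and $v$ overlap, so $l(v) \leq r(t_i)$ and $r(v) \geq l(t_i)$. Since $I''$ is independent, $v$ is non-adjacent to $t_{i-1}$ and to $t_{i+1}$, so those intervals are disjoint from $v$. Now $l(v) \leq r(t_i) < l(t_{i+1}) \leq r(t_{i+1})$ rules out $t_{i+1}$ lying entirely to the left of $v$, so disjointness forces $r(v) < l(t_{i+1})$, giving $v \prec_l t_{i+1}$. Symmetrically, $r(v) \geq l(t_i) > r(t_{i-1}) \geq l(t_{i-1})$ rules out $v$ lying entirely to the left of $t_{i-1}$, so disjointness forces $r(t_{i-1}) < l(v)$, giving $t_{i-1} \prec_l v$. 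Iterating through the reconfiguration sequence preserves the order of tokens, which concludes the proof.

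I do not anticipate a serious obstacle; the argument is essentially a direct unwinding of the definitions of adjacency and independence in an interval graph. The one point worth stressing is that the orders $\prec_l$ and $\prec_r$ coincide on the vertices of any independent set, so speaking of ``the leftmost token'' and of a well-defined position in the order is unambiguous at every step of the sequence, which is what makes the induction go through cleanly.
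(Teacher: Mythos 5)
Your proof is correct, and it is a genuinely cleaner route than the paper's. The paper argues by contradiction: it looks at the first moment in the sequence where the leftmost token changes identity, then splits into two cases (the leftmost token was slid and lost its place, or a non-leftmost token was slid into the leftmost position), reaching an interval-arithmetic contradiction in each. You instead prove directly the stronger invariant that any single slide preserves the full left-to-right order of tokens, by showing that the token slid from $t_i$ to $v$ must land strictly between $t_{i-1}$ and $t_{i+1}$ in the order; this subsumes both of the paper's cases in one argument, needs no contradiction, and gives without extra work the generalization to tokens at arbitrary positions, which the paper only asserts afterwards as an aside (``The statement of Lemma~\ref{lem:nonpermtoken} can be easily generalized for any position token''). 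Both proofs rest on the same two structural facts --- adjacency of $t_i$ and $v$ forces interval overlap, independence of the resulting set forces disjointness from the neighboring tokens --- so the underlying mechanism is identical; what your version buys is a unified case-free argument and the stronger conclusion up front.
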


\begin{proof}
Let $u$ be the leftmost vertex of $u$. Assume by contradiction that there exists a reconfiguration sequence from $I$ to $J$ where the token on the leftmost vertex of $J$ is not the token of origin $u$. Let $K$ be the first independent set of the sequence where the leftmost token is the token of origin $u$.
Let $L$ be the independent set before $K$ in the sequence. We have $L \setminus K = \{ u_1 \}$ and $K \setminus L = \{ u_2 \}$. Since the leftmost token of $L$ is not anymore the leftmost token of $K$, two cases may occur. Either $u_1$ is the leftmost vertex of $L$ and $u_2$ is not the leftmost vertex of $K$. Or $u_1$ is not the leftmost vertex of $L$ and $u_2$ is the leftmost vertex of $K$.

Let us first consider the first case. Consider the second leftmost vertex $v$ of $K$. Note that $v$ is in both $K$ and $L$ since only one token is slided at any step. Since $u_1$ is the leftmost vertex of  $L$, $r(u_1) < l(v)$. Since $v$ is the leftmost vertex of $K$, $r(v) < l(u_2)$. Moreover, since tokens are slided along edges, $(u_1,u_2)$ is an edge and then $r(u_1) > l(u_2)$. Thus $l(u_2) < r(u_1) < l(v) \leq r(v) < l(u_2)$, a contradiction.

Now assume that $u_1$ is not the leftmost vertex of $L$ and $u_2$ is the leftmost vertex of $K$. Let $v$ be the leftmost vertex of $L$. Note that $v$ is in both $K$ and $L$ since only one token is slided at any step. Since $v$ is the leftmost vertex of  $L$, $r(v) < l(u_1)$. Since $u_2$ is the leftmost vertex of $K$, $r(u_2) < l(v)$. Moreover, since tokens are slided along edges, $(u_1,u_2)$ is an edge and then $r(u_2) > l(u_1)$. Thus $l(u_2) < r(u_2) < l(v) \leq r(v) < l(u_2)$, a contradiction.
\end{proof}

The statement of Lemma~\ref{lem:nonpermtoken} can be easily generalized for any position token but we only need it for the first token in the remaining of the proof.
Using Lemma~\ref{lem:nonpermtoken}, we can prove the following statement.
 
 \begin{lemma}\label{lem:totheleft}
 Let $I$ and $J$ be two independent sets of $G$ with the same leftmost vertex $u$. If we can reconfigure $I$ into $J$ via a sequence of moves $\mathcal{P}$ that never slide the leftmost token (i.e. $u$ appears in all the independent sets of the sequence), then:
 \begin{itemize}
  \item $\mathcal{P}$ provides a reconfiguration from $I \setminus u$ to $J \setminus u$ in $G_u$.
  \item For every $v \prec_r u$, $\mathcal{P}$ provides a reconfiguration from $I \cup v \setminus u$ to $J \cup v \setminus u$ (in $G$).
 \end{itemize}
 \end{lemma}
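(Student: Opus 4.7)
The plan is to first show that $u$ remains the leftmost vertex of every independent set appearing in $\mathcal{P}$. Writing $\mathcal{P}$ as $I = I_0, I_1, \ldots, I_m = J$, I would apply Lemma~\ref{lem:nonpermtoken} to each prefix $I_0, \ldots, I_t$: this prefix is itself a reconfiguration from $I$ to $I_t$ during which the token on $u$ is never slid, so its token of origin is still at $u$, and by Lemma~\ref{lem:nonpermtoken} this token of origin is the leftmost vertex of $I_t$. Consequently, for every $t$ and every $w \in I_t \setminus \{u\}$, one has $l(w) > l(u)$ and $w$ is not adjacent to $u$, which forces $l(w) > r(u)$, i.e., $w$ is a vertex of $G_u$.

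For the first bullet, the sequence $I_0 \setminus u, \ldots, I_m \setminus u$ is then a sequence of independent sets entirely contained in $V(G_u)$. Consecutive sets differ by exactly one slide along an edge of $G$ between two vertices of $V(G_u)$, which is an edge of $G_u$ by definition. This is therefore a valid reconfiguration from $I \setminus u$ to $J \setminus u$ in $G_u$.

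For the second bullet, fix any $v$ with $v \prec_r u$, so $r(v) < r(u)$. For every $w \in V(G_u)$, one has $l(w) > r(u) > r(v)$, so $w$ is not adjacent to $v$. Hence $(I_t \setminus u) \cup \{v\}$ is an independent set of $G$ for every $t$ (note also that $v$ itself is not in $V(G_u)$ since $l(v) \le r(v) < r(u)$, so $v$ does not coincide with any vertex already present in $I_t \setminus u$). Moreover, every slide between consecutive $I_t$'s happens inside $V(G_u)$ and involves neither $u$ nor $v$, so exactly the same slide is valid in the modified sequence, yielding a reconfiguration from $I \cup v \setminus u$ to $J \cup v \setminus u$ in $G$.

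The only non-routine step is the initial application of Lemma~\ref{lem:nonpermtoken} to every prefix of $\mathcal{P}$, on which the rest rests; once one knows that $u$ is the leftmost vertex throughout and hence $I_t \setminus \{u\} \subseteq V(G_u)$ at every intermediate step, both bullet points follow by transporting the moves unchanged and checking independence.
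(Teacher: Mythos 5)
Your proposal is correct and follows essentially the same route as the paper: both use Lemma~\ref{lem:nonpermtoken} (applied to prefixes of $\mathcal{P}$) to conclude that $u$ stays the leftmost vertex throughout, so all other tokens remain in $V(G_u)$, and then transfer the moves unchanged to establish both bullets. Your write-up is somewhat more explicit than the paper's (which argues the first bullet briefly by contradiction), but the substance and key lemma invoked are identical.
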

 
 \begin{proof}
 Assume by contradiction that $\mathcal{P}$ does not provide a reconfiguration from $I \setminus u$ to $J \setminus u$ in $G_u$. It means that at some step, there exists a vertex $v$ in the current independent set that is not in $G_u$. Thus $l(v) < r(u)$. However, by Lemma~\ref{lem:nonpermtoken}, the leftmost token during the transformation from $I$ to $J$ in $G$ always is on $u$. Thus $l(v)$ must be at least $r(u)$, a contradiction.
 
 The proof of the second point is immediate since, at any step, the left extremity of the second token is larger than $r(u)$.  Since $r(v) <r(u)$, any move of $\mathcal{P}$ can be performed and thus the sequence of moves $\mathcal{P}$ transforms $I \cup v \setminus u$ into $J \cup v \setminus u$ (in $G$).
 \end{proof}

The proof of Lemma~\ref{lem:totheleft} is based on the fact that tokens cannot be permuted and that the leftmost token is never moved during the sequence. Thus we can perform exactly the same sequence of moves in $G_u$. Actually, we can generalize this argument to any sequence even if the leftmost token is moved.
Given a reconfiguration sequence $\mathcal{P}$ from $I$ to $J$, if we delete the leftmost vertex from each independent set of the sequence, we obtain a reconfiguration sequence from $I \setminus u$ to $J \setminus v$ where $u$ and $v$ respectively denotes the leftmost vertex of $I$ and $J$. Such a sequence can be extracted from $\mathcal{P}$ by omitting the moves in the sequence where the leftmost vertex is slided.
In Lemma~\ref{lem:totheleft}, we obtained the existence of the sequence in $G_u$. When the leftmost token is moved, we cannot hope for such a statement. However, we can prove that a sequence exists in $G_w$ for some well-chosen $w$.

\begin{lemma}\label{lem:totheleft2}
Let $I$ and $J$ be two independent sets such that there exists a transformation from $I$ to $J$. Let $u,v,w$ be respectively the leftmost token of $I$, the leftmost token of $J$ and the $\prec_r$-smallest leftmost token all along the sequence between $I$ and $J$.

There exists a transformation from $I \setminus  u$ into $J \setminus v$ in $G_w$.
\end{lemma}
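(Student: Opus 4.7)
The plan is to take the given reconfiguration sequence $\mathcal{P} = I = I_0, I_1, \ldots, I_m = J$ from $I$ to $J$, delete the leftmost token of each $I_i$, and show the resulting sequence (after compressing consecutive duplicates) is a valid reconfiguration in $G_w$ from $I \setminus u$ to $J \setminus v$.

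First I would invoke Lemma~\ref{lem:nonpermtoken} applied to every prefix $I_0, \ldots, I_i$: the leftmost vertex $t_i$ of $I_i$ is the token of origin $u$. Thus removing the leftmost vertex of each $I_i$ is ``removing the same token'' throughout the sequence; in particular the resulting sequence starts with $I \setminus u$ and ends with $J \setminus v$.

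Next I would verify the two properties a reconfiguration sequence in $G_w$ must satisfy. For containment in $G_w$: fix some $I_i$ with leftmost vertex $t_i$, and let $x \in I_i \setminus \{t_i\}$. Since $I_i$ is ordered and $t_i$ is its leftmost element, $l(x) > r(t_i)$. By definition of $w$ we have $r(w) \leq r(t_i)$, hence $l(x) > r(w)$, so $x \in V(G_w)$. For validity of moves: the transition from $I_i$ to $I_{i+1}$ slides one token along an edge. If this is the leftmost token then $I_i \setminus \{t_i\} = I_{i+1} \setminus \{t_{i+1}\}$ and the step collapses to a no-op, which I would simply delete from the sequence. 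Otherwise the same slide $(x,y)$ is performed; both endpoints lie in $V(G_w)$ by the argument above, so the edge $(x,y)$ is present in $G_w$, and the move is legal there.

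Finally, after removing the no-op steps, what remains is a sequence of independent sets in $G_w$, each adjacent in $TS_{k-1}(G_w)$ to the next, starting at $I \setminus u$ and ending at $J \setminus v$. This is the desired transformation. The only delicate point is recognizing that the leftmost token really is a single persistent token across the whole sequence (so ``removing the leftmost of each $I_i$'' is a coherent operation), which is exactly the content of Lemma~\ref{lem:nonpermtoken}; once that is in hand, the rest is bookkeeping with the extremities.
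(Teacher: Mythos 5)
Your proof is correct and uses essentially the same construction as the paper: delete the leftmost vertex from every independent set in the sequence, drop the resulting no-op steps, and verify that what remains lives in $G_w$ via the minimality of $w$. The only cosmetic difference is that you make the appeal to Lemma~\ref{lem:nonpermtoken} explicit (to justify that ``the leftmost token'' is a single persistent token), whereas the paper leaves this implicit; your direct verification that each $x \in I_i \setminus \{t_i\}$ lies in $G_w$ is just the contrapositive of the paper's argument by contradiction.
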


\begin{proof}
Let $I$ and $J$ be two independent sets with leftmost vertices respectively $u$ and $v$. Let $\mathcal{Q}$ be a transformation from $I$ to $J$. Let $w$ be the $\prec_r$-smallest leftmost token during the sequence. Let us prove that there exists a transformation $\mathcal{Q}'$ of $I \setminus \{ u \}$ into $J \setminus \{ v\}$ in $G_w$.
Let us denote by $I_t$ the sequence after the $t$-th move of $\mathcal{Q}$ and $J_t$ independent set $I_t$ minus the leftmost vertex. Let us prove that the transformation $\mathcal{Q}$ can be transformed into $\mathcal{Q}'$ in such a way after $t$ steps of $\mathcal{Q}'$, the current independent set is $J_t$.

The construction of $\mathcal{Q}'$ is simple. Indeed, if the token slided between step $t$ and $t+1$ is the leftmost token, then we omit the transformation. Otherwise we perform the same transformation. Since $J_t$ is $I_t$ minus the first token, this transformation is possible in $G$ and leadts to $J_{t+1}$ which is $I_{t+1}$ minus the first token.

To conclude we just have to show that at any step, $J_t$ is included in $G_w$. Assume by contradiction that at some step $t$, $J_t$ is not included in $G_w$. Thus, the leftmost vertex $z$ of $J_t$ satisfies $l(z) < r(w)$. Let $y$ be the leftmost vertex of $I_t$. Since $I_t$ is an independent set, $r(y) < l(z)$. So $r(y) < r(w)$, contradicting the minimality of $w$.

Thus $\mathcal{Q}'$ provides a transformation of $I \setminus u$ into $J \setminus v$ in $G_w$.
\end{proof}

One can notice that by an immediate induction, a similar statement can be proved if we delete an arbitrary number of tokens (i.e. if we decide to delete the $\ell$-th first tokens). 

Using very similar arguments as in the proof of Lemma~\ref{lem:totheleft2}, one can prove the following statement.

\begin{lemma}\label{lem:totheright}
Let $I$ and $J$ be two independent sets such that there exists a transformation from $I$ to $J$. Let $w$ be the $\prec_l$-largest second token all along the sequence between $I$ and $J$.
Then the leftmost token of $I$ and the leftmost token of $J$ are in the same connected component of $G^w$.
\end{lemma}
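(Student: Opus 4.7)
The plan is to mirror the structure of Lemma~\ref{lem:totheleft2}, but instead of throwing away the leftmost token, we now isolate its trajectory and show that this trajectory stays entirely inside $G^w$. Let $u$ and $v$ denote the leftmost vertices of $I$ and $J$ respectively, and let $\mathcal{Q} = I_0,I_1,\dots,I_N$ be the given transformation, with $I_0=I$ and $I_N=J$. We follow the token of origin $u$ across the sequence and look at the vertices $x_0,x_1,\dots,x_N$ it visits.

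First I would establish that $x_t$ is always the leftmost vertex of $I_t$. This is exactly Lemma~\ref{lem:nonpermtoken} applied to the prefix $I_0,\dots,I_t$: the leftmost token of $I_t$ must be the token of origin $u$. Consequently, between $I_t$ and $I_{t+1}$, either the move does not touch $x_t$ (and then $x_{t+1}=x_t$), or it slides $x_t$ along an edge to $x_{t+1}$. Erasing consecutive repetitions, the sequence $x_0,x_1,\dots,x_N$ is therefore a walk in $G$ from $u$ to $v$.

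The core step is then to bound the right extremity of $x_t$. Since we assume $|I|\ge 2$ (otherwise the statement is vacuous, as there is no second token to define $w$), for each $t$ the set $I_t$ has a well-defined second-leftmost vertex $s_t$. By definition of $w$ we have $l(s_t)\le l(w)$, and by independence of $I_t$ we have $r(x_t)<l(s_t)$. Combining these,
\[
r(x_t) \;<\; l(s_t) \;\le\; l(w),
\]
so $x_t$ lies in $V(G^w)$ for every $t$. Because $G^w$ is an induced subgraph of $G$, every edge of $G$ used to slide the leftmost token from $x_t$ to $x_{t+1}$ is also an edge of $G^w$. Hence $u=x_0$ and $v=x_N$ are connected by a walk in $G^w$, and thus lie in the same connected component.

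The only subtle point is the step that identifies $x_t$ as the leftmost vertex of every intermediate independent set; once this is in hand, the argument is essentially a one-line interval inequality. Since Lemma~\ref{lem:nonpermtoken} is stated for arbitrary endpoints of a reconfiguration, it applies to each prefix of $\mathcal{Q}$ and delivers exactly this property, so no real difficulty remains.
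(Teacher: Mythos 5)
Your proof is correct and takes essentially the same approach as the paper: both rely on Lemma~\ref{lem:nonpermtoken} to identify the token of origin $u$ with the leftmost token at every step, then combine the independence inequality $r(x_t) < l(s_t)$ with $l(s_t) \le l(w)$ to conclude the trajectory stays in $G^w$. Your version is marginally cleaner in that it directly exhibits a walk in $G^w$ rather than constructing a size-$1$ reconfiguration sequence and invoking Lemma~\ref{clm:taille1}, but the two formulations are interchangeable.
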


\begin{proof}
Let $I$ and $J$ be two independent sets and $\mathcal{Q}$ be a transformation from $I$ to $J$.
Let  $w$ be the $\prec_l$-largest second leftmost vertex along the sequence. Let us prove that we can construct a transformation $\mathcal{Q}'$ turning the leftmost vertex of $I$ into the leftmost vertex of $J$ in $G^w$. Since these independent sets have size $1$, Lemma~\ref{clm:taille1} ensures that the two leftmost vertices are in the same connected component of $G^w$.

This transformation is derived from $\mathcal{Q}$ and at any step, the current unique vertex of the independent set is the leftmost vertex of the current independent set of $\mathcal{Q}$. More formally, let us denote by $I_i$ the sequence after the $i$-th move of $\mathcal{Q}$ and $u_i$ the leftmost vertex of $I_i$. Let $J_i$ be the sequence after the $i$-th move of $\mathcal{Q}'$. We want to show that $J_i$ is $\{ u_i \}$.

First note that $lm(I)$ and $lm(J)$ are in $G^w$. Thus the conclusion holds for $t=0$.
Let us now prove it by induction on $t$. If the move at step $t+1$ consists in moving a token that is not the leftmost one, we do not perform any move in $\mathcal{Q}'$ and then the current independent is still the leftmost vertex of $I_{t+1}$.

Otherwise, assume that $(u_t,u_{t+1})$ is the move. Let $w_t$ be the second leftmost vertex. By definition $l(w_t) < l(w)$. Since $u_{t+1}$ and $w_t$ are not incident ($I_{t+1}$ is an independent set) and tokens cannot be permuted by Lemma~\ref{lem:nonpermtoken}, $u_{t+1}$ is in $G^w$ and then the move $(u_t,u_{t+1})$ also exists, which concludes the proof.
\end{proof}

\subsection{Worst r-index}

Let $H$ be an interval graph and $I$ an independent set of $H$ of size $k$, we denote by $\mathcal{C}_{H,k}(I)$ the connected component of $I$ in $TS_k(H)$.
Given an independent set $J$, we denote by $lm(J)$ the leftmost vertex of $J$. Let

\[ RM(I,H) = \max_{\prec_l} \Big\{ lm(J) \ : \ J \in \mathcal{C}_{H,k}(I)\Big\}. \]
In other words, $RM_H(I)$ is the rightmost possible (for $\prec_l$) leftmost vertex of $J$ amongst all the independent sets $J$ in the component of $I$ in $TS_k(H)$. So if we try to push the leftmost vertex to the right, we cannot push it further than $RM(I,H)$.

Now, we can define the \emph{worst r-index} of $u$ for $H$ for independent sets of size $k$, denoted by $w_r(u,k,H)$, which is
\[ w_r(u,k,H) = \min_{\prec_l} \Big\{ RM(I,H) \ : \ I \text{ of size $k$ with leftmost vertex }u \Big\}. \]

When no independent set of size $k$ in $H$ has leftmost vertex $u$, we set for convenience $w_r(u,k,H)=+\infty$. Intuitively, $w_r(u,k,H)$ corresponds to the furthest to the right we can push a stable set in the worst case.

We define the symmetric notion consisting in pushing the leftmost vertex to the left (for $\prec_r$).
\[ LM(I,H) = \min_{\prec_r} \Big\{ lm(J) \ : \ J \in \mathcal{C}_{H,k}(I)\Big\}. \]

Now, we can define the \emph{worst l-index} of $u$ for $H$ for independent sets of size $k$, denoted by $w_l(u,k,H)$, which is
\[ w_l(u,k,H) = \max_{\prec_r} \Big\{ RM(I,H) \ : \ I \text{ of size $k$ with leftmost vertex }u \Big\}. \]
 Intuitively, $w_l(u,k,H)$ corresponds to the furthest to the left we can hope to push a stable set in the worst case.
 Note that when we want to push an independent to the left, we want to minimize $\prec_r$ while when we want to push an independent to the right we want to maximize $\prec_l$. \smallskip

\noindent\textbf{Example.}
\begin{figure}
    \centering
    \includegraphics[scale=0.65]{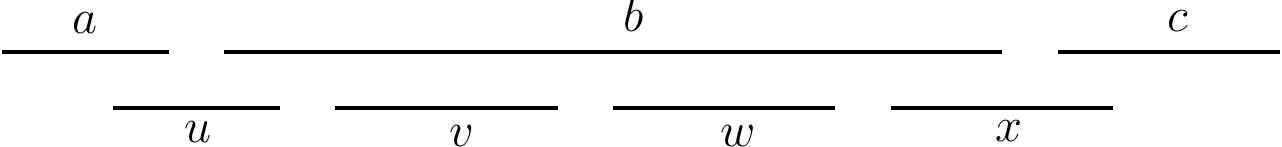}
    \caption{The worst r-index $w_r(v,2,H)$ is $w$. Indeed, if $v$ cannot move, then one cannot move $w$ on the right. On the other hand, the worst r-index $w_r(u,2,H)$ of $u$ is $c$ since any vertex can be slided to $z$ in $G_u$. In particular, worst r-indices are not necessarily increasing.}
    \label{fig:worstlefmost}
\end{figure}
Let us illustrate these notions on the graph $H$ given in Figure~\ref{fig:worstlefmost}. First note that we have $RM_H(\{v,w\})=v$. Indeed, no token in $\{v,w\}$ can be moved, thus the $\prec_l$-largest leftmost vertex of an independent set in the connected component of $\{v,w\}$ in $TS_k(H)$ is $v$ itself. In particular we have $w_r(v,2,H)=v$ since any independent set of size $2$ containing $v$ contains an independent set with leftmost vertex $v$.

Similarly, since no token in $\{v,w\}$ can be moved, the $\prec_r$-smallest leftmost vertex of an independent set  in the connected component of $\{v,w\}$ in $TS_k(H)$ is $v$ itself. Since $LM_H(\{v,w\})$ is indeed smaller or equal to $v$ (for $\prec_r$), we have $LM_H(\{v,w\})=v$. Since any independent set with leftmost vertex $v$ contains (indeed) $v$, we have $w_l(v,2,H) = v$.

Now consider any independent of size $2$ containing $u$, e.g. $\{u,v \}$. We can slide the vertex $u$ on $a$, and then the vertex $v$ on $b$, $x$ and finally $c$. Then we can slide the token on $a$ to $u$, $b$ and finally $w$. Thus $RM_H(\{u,v\})= w$. Since we can prove the same for any independent set of size $2$ containing $u$ and since $w$ is the second rightmost vertex of the rightmost independent set, we have $w_r(u,2,H)=w$. Note in particular that the worst r-index is not necessarily increasing. Indeed we have $u \prec_l v$ and $w_r(u) \succ_l w_r(v)$.

One can easily notice that any independent set of size $2$ containing $u$ as leftmost vertex can be slided on $a$. Thus $w_l(u,2,H)=a$ since $a$ is the minimum vertex for $\prec_r$.
\smallskip

We first argue how to compute $w_r(u,k,H_b)$ when $k=1$, for any vertices $u$ and $b$. To do it, we first need the following simple lemma.

 \begin{lemma}\label{clm:taille1}
Let $H$ be a graph.  The independent sets $\{u \}$ and $\{v \}$ are in the same connected component of $TS_1(H)$ if and only if $u$ and $v$ are in the same connected component of $H$.
 \end{lemma}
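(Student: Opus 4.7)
The plan is to observe that with a single token, the independence constraint is vacuous (a one-vertex set is always independent), so a reconfiguration sequence in $TS_1(H)$ is essentially the same object as a walk in $H$. I would therefore prove the two implications directly and in parallel, using the definition of $TS_1(H)$ and the definition of connectivity in $H$.

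For the forward direction, assume $\{u\}$ and $\{v\}$ lie in the same connected component of $TS_1(H)$, and let $\{u\}=I_0,I_1,\ldots,I_m=\{v\}$ be a reconfiguration sequence. Write $I_i=\{u_i\}$, so that $u_0=u$ and $u_m=v$. By definition of adjacency in $TS_k(H)$, for every $i$ we have $I_i\setminus I_{i+1}=\{u_i\}$ and $I_{i+1}\setminus I_i=\{u_{i+1}\}$ with $(u_i,u_{i+1})\in E(H)$. Thus $u_0,u_1,\ldots,u_m$ is a walk in $H$ from $u$ to $v$, so they lie in the same connected component of $H$.

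For the converse, suppose $u$ and $v$ lie in the same component of $H$, and let $u=u_0,u_1,\ldots,u_m=v$ be a path (hence a walk) in $H$. Each $\{u_i\}$ is trivially a $1$-independent set, and since $(u_i,u_{i+1})\in E(H)$, the sets $\{u_i\}$ and $\{u_{i+1}\}$ are adjacent in $TS_1(H)$: the token on $u_i$ is slid along the edge to $u_{i+1}$. Concatenating these moves yields a reconfiguration sequence from $\{u\}$ to $\{v\}$ in $TS_1(H)$, so they belong to the same connected component. There is no real obstacle here; the lemma is essentially a restatement of connectivity for the single-token case, and the only thing to check is that both notions of adjacency translate literally into each other, which is immediate from the definitions.
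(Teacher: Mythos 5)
Your proof is correct and follows essentially the same route as the paper: the paper likewise translates a path in $H$ into a token-sliding sequence by enclosing each vertex in braces, and goes the other way by dropping the braces. The only difference is cosmetic — you spell out the two directions a bit more carefully and note explicitly that the independence constraint is vacuous for a single token, which the paper leaves implicit.
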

 
 \begin{proof}
Let $u,v$ be two vertices of $H$. Assume that $u$ and $v$ are in the same connected component. Thus there exists a path $u=u_1,\ldots,u_j=v$ between $u$ and $v$. The sequence $S_1=\{u \}, S_2=\{u_2\},\ldots, S_j=\{ v\}$ is a path of $TS_{1}(H)$ from $\{ u \}$ to $\{ v\}$ since at any step the token is slided along an edge. Similarly, any sequence from $\{ u \}$ to $\{ v\}$ can be transformed into a path by omitting the brackets in the sequence since the token is slided along an edge.
\end{proof}

From Lemma~\ref{clm:taille1}, we derive that for any two vertices $u,b$, if $u \in H_b$ then $w_r(u,1,H_b)$ is the rightmost vertex for $\prec_l$ in the connected component of $u$ in $H_b$. Since $w_r(u,1,H_b)=+\infty$ if $u \not\in H_b$, we can compute $w_r(u,1,H_b)$ in quadratic time (linear in the number of edges) for any two vertices $u,b$.

By symmetry, Lemma~\ref{clm:taille1} also ensures that for any two vertices $u,b$, if $u \in H_b$ then $w_l(u,1,H_b)$ is the $\prec_r$-leftmost vertex in the connected component of $u$ in $H_b$.

We now provide an algorithm that computes $w_r(u,k,H_b)$ for $k \geq 2$, assuming we can access in constant time all $w_r(v,\ell,H_c)$ for $\ell < k$ and any vertices $v$ and $c$. In order to define this algorithm, we need to extend the notion of $w_r$ to a set of vertices instead of a single verte: $w_r(S,\ell,H_c)=min_{\prec_l} \{w_r(j,\ell,H_c) | j \in S\}$.

\begin{lemma}\label{clm:procedure}
Let $H$ be an interval graph. The value $w_r$ returned by Procedure~\ref{alg:calculw} with input $u,k,H,b$ is $w_r(u,k,H_b)$.
\end{lemma}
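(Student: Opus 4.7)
The plan is to establish correctness by induction on $k$, with the base case $k=1$ following from Lemma~\ref{clm:taille1} (it reduces to finding the $\prec_l$-rightmost vertex in the connected component of $u$ in $H_b$). For the inductive step we assume $w_r(v,\ell,H_c)$ is known exactly for all $\ell<k$ and all vertices $v,c$, and prove both directions of the equality separately.

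For the upper bound, we must exhibit an independent set $I$ of size $k$ with leftmost vertex $u$ in $H_b$ whose component contains no independent set with leftmost vertex $\succ_l w_r$. The construction is $I=\{u\}\cup I'$, where $I'$ is an adversarial independent set of size $k-1$ in $G_u\cap H_b$ chosen so that, for every vertex $w$ that could plausibly arise as the $\prec_r$-smallest leftmost vertex along a reconfiguration (i.e.\ for every $w$ with $l(u)\le l(w)$ and $w\in H_b$), the recursive value $w_r$ of $I'$ in $G_w$ is bounded by $w_r$. Suppose for contradiction that some reconfiguration from $I$ reaches an independent set with leftmost vertex $v$ with $v\succ_l w_r$. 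Applying Lemma~\ref{lem:totheleft2} with this $v$ and the $\prec_r$-smallest leftmost token $w$ along the sequence yields a reconfiguration of $I\setminus\{u\}=I'$ into a set with leftmost vertex $v$ inside $G_w$; but by the adversarial choice of $I'$ this leftmost vertex cannot exceed $w_r$, contradicting $v\succ_l w_r$. Combined with the non-permutation property (Lemma~\ref{lem:nonpermtoken}), this gives the upper bound.

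For the lower bound, we show that every independent set $I$ of size $k$ in $H_b$ with leftmost vertex $u$ can be reconfigured so its leftmost vertex is $\succeq_l w_r$. By induction, $I\setminus\{u\}$ in $G_u$ can be reconfigured (leaving $u$ fixed as the frozen leftmost token, which by Lemma~\ref{lem:totheleft} lifts to a valid reconfiguration in $H_b$) until its own leftmost token sits at a target position $v^\ast$ given by a recursive call. Once sufficient room is opened to the right, we slide $u$ forward step by step along a path in $H_b$ (using Lemma~\ref{clm:taille1} applied to the single-vertex component of $u$ in the subgraph avoiding the current tokens) until the leftmost vertex of the current configuration reaches $w_r$. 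Any block along the way can be circumvented by first pushing the second token further right, then continuing; this is exactly what the procedure does, and the recursive call for smaller $k$ guarantees a consistent bound at every stage.

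The main obstacle is the upper bound: the value of $w$ in the invocation of Lemma~\ref{lem:totheleft2} is not fixed but depends on the adversary's reconfiguration, so the adversarial $I'$ must be chosen to defeat \emph{every} possible $w$ simultaneously. This is precisely why the definition of $w_r$ involves a minimum over initial configurations and why the procedure takes a minimum over the positions of the second token in $G_u\cap H_b$: the adversarial $I'$ is obtained by placing the second token at the vertex realizing this minimum, and the correctness of this choice is the technical crux that ties the recursion together.
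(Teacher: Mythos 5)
Your overall plan (induction on $k$, split into a lower bound showing every starting configuration can be pushed to the output value, and an upper bound exhibiting an adversarial instance that cannot go further) matches the skeleton of the paper's proof, which is organized around Claims~\ref{clm:superieurmodif}, \ref{clm:superieur} and~\ref{clm:inferieur}. But both directions as you wrote them have genuine gaps.

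For the lower bound, your description of the reconfiguration — push the second token right, then ``slide $u$ forward step by step,'' circumventing blocks by pushing the second token further right — does not match what Procedure~\ref{alg:calculw} actually computes and will not reach the output value in general. The procedure's while-loop alternately pushes the remaining $k-1$ tokens to the right \emph{and moves $c$ to the left} (to the $\prec_r$-leftmost vertex of its component in $H_b^j$), iterating to a fixed point $c_t$. Moving the first token to the left, below its starting position $u$, is what enlarges the graph $H_{c}$ in which the recursive call $w_r(V(H_u),k-1,H_c)$ is made, and this in turn is what lets the first token later reach positions to the right of anything accessible without the detour. Your argument contains no leftward phase at all, so it tracks neither the iterates $c_0\succ_r c_1\succ_r\cdots\succ_r c_t$ nor the final step computing the $\prec_l$-maximal vertex reachable from $c_t$ in $H_b^j$. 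The paper handles exactly this via Claim~\ref{clm:superieurmodif} (``$I$ and $(I\setminus u)\cup c_t$ are in the same component''), proved by induction on the loop index $\ell$, and only then slides $c_t$ rightward.

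For the upper bound, you correctly identify the crux — the adversarial $I'$ must simultaneously defeat every possible $\prec_r$-smallest leftmost token $w$ arising along a reconfiguration — but you do not actually resolve it; you only assert it. Saying that $I'$ is ``chosen so that, for every vertex $w$ ..., the recursive value ... is bounded'' presupposes the existence of a single $(k-1)$-set that is simultaneously worst for all $w$, and ``placing the second token at the vertex realizing this minimum'' conflates a position with an entire $(k-1)$-set. The paper's resolution is different and essential: it constructs $I$ adversarially with respect to \emph{one} specific vertex, namely the fixed point $c_t$, taking $I=(I_t\setminus c_t)\cup u$ with $RM(I_t\setminus c_t, H_{c_t})$ minimal, and then proves by a first-failure argument that along any reconfiguration from this $I$ the leftmost token stays $\succ_r c_t$ and the second leftmost stays $\prec_l w_r(V(H_u),k-1,H_{c_t})$. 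Crucially, a violation of the first invariant would contradict the termination condition $c_{t+1}=c_t$ of the while-loop, and a violation of the second (via Lemma~\ref{lem:totheleft}) would contradict the minimality of $I$. Without this fixed-point/self-consistency argument the adversarial construction is not justified.
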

 \begin{algorithm}[h!]
\caption{Computing $w_r(u,k,H_b)$ and $w_l(u,k,H_b)$, with $k \geq 2$ and assuming constant-time access to values for smaller sizes of the independent set}
\label{alg:calculw}
\begin{algorithmic}[PERF]
  \STATE $c:=u$.
  \STATE $r:=+\infty$.
  \WHILE{$r \neq c$}
  \STATE $r=c$.
  \STATE $j= w_r(V(H_u),k-1,H_c)$
  \STATE $c=$ the leftmost vertex (for $\prec_r$) in the connected component of $c$ in $H^j_b$.

  \ENDWHILE
  \RETURN $w_l=c$ and $w_r=$ the maximal vertex (for $\prec_l$) that can be reached from $c$ in $H_b^{j}$.
\end{algorithmic}
\end{algorithm}

Let us however briefly explain the behavior of Procedure~\ref{alg:calculw} and explain this procedure is correct.

Procedure~\ref{alg:calculw} essentially proceeds as follows: given an independent set $I$ containing $u$ as leftmost vertex, we try to ``push to the right'' the rest of the independent set without sliding $u$. In other words, we try to push an independent set of size $k-1$ to the right in $H_u$. Then we try to push $u$ to the left knowing that the second leftmost vertex may be in $w_r(u,k,H_a)$ (mentioned as $j$ in Procedure~\ref{alg:calculw}). Since $u$ has been pushed to the left (on the vertex $c$ when we follow the description of Procedure~\ref{alg:calculw}), it may make some space for the rest of the independent set. Then, instead of trying to push $I \setminus u$ to the right in $G_u$, we do it in $G_{c}$. We repeat this operation as long as the leftmost vertex moves to the left. This guarantees that there are at most $n$ iterations, each of them processed in quadratic time. Therefore, Procedure~\ref{alg:calculw} terminates in cubic time. When ``$c=u$'', it means that the rest of the independent set cannot be pushed further to the right, which in turn implies that the rightmost possible vertex we can reach from $u$ is also the rightmost possible vertex we can reach from $u$ in $H_a^{w_r(u,k,H_a)}$. Similarly, the leftmost possible vertex that can be reached to the left is the current vertex.

Keeping in mind this transformation, it is not hard to show that the value output by Procedure~\ref{alg:calculw} is smaller or equal to the worst r-index $w_r(u,k,H_a)$. We show that it is actually an equality by showing that the transformation of any independent set $I$ in $H_a$ with leftmost vertex $u$ can be described as above.

\begin{proof}
We call $(c_0,\ldots,c_t)$ the successive values taken by the variable $c$ in Procedure~\ref{alg:calculw} when we check if the while condition is satisfied. Note that $c_{t-1}=c_t$. Note moreover that the sequence $c_i$ is non increasing (for the order $\prec_r$).
Let us denote by $d$ the value output by the algorithm.

The proof is a proof by induction on $k$.

\begin{claim}\label{clm:superieurmodif}
For every independent set $I$ of size $k$ in $H_b$ with leftmost vertex $u$, the sets $I$ and $(I \setminus u) \cup c_t$ are in the same connected component of $TS_k(H_b)$.
\end{claim}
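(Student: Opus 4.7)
The plan is to proceed by induction on the iteration index $i \in \{0, 1, \ldots, t\}$, showing that $I$ lies in the same connected component of $TS_k(H_b)$ as $(I \setminus u) \cup c_i$. The base case $i = 0$ is immediate because $c_0 = u$ and hence $(I \setminus u) \cup c_0 = I$. For the inductive step, assuming the claim holds at index $i$, it suffices to exhibit a reconfiguration from $(I \setminus u) \cup c_i$ to $(I \setminus u) \cup c_{i+1}$.

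Let $v$ denote the leftmost vertex of $I \setminus u$. Since $u$ was the leftmost of $I$, $v$ lies in $V(H_u)$; moreover, since $c_i \preceq_r u$, every vertex of $I \setminus u$ lies in $V(H_{c_i})$. I would build the sub-reconfiguration in three stages. \emph{(i) Push the tail rightwards.} By the very definition of $w_r$,
\[ RM(I \setminus u, H_{c_i}) \succeq_l w_r(v, k-1, H_{c_i}) \succeq_l w_r(V(H_u), k-1, H_{c_i}) = j_{i+1}, \]
so $I \setminus u$ can be reconfigured inside $TS_{k-1}(H_{c_i})$ into some independent set $I^\ast$ whose leftmost vertex $p$ satisfies $p \succeq_l j_{i+1}$. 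Since $c_i$ is disjoint from every vertex of $V(H_{c_i})$, pairing $c_i$ with each intermediate set yields a valid sequence in $TS_k(H_b)$. \emph{(ii) Slide the first token.} Because $p \succeq_l j_{i+1}$ implies $H_b^{j_{i+1}} \subseteq H_b^p$, the component of $c_i$ in $H_b^p$ contains $c_{i+1}$. Sliding along any path from $c_i$ to $c_{i+1}$ in this component is legal, as each intermediate vertex has right endpoint strictly less than $l(p)$ and therefore avoids $I^\ast$. \emph{(iii) Restore the tail.} I would now play the reverse of stage (i). Each intermediate tail vertex has left endpoint greater than $r(c_i) \geq r(c_{i+1})$, where the inequality $r(c_{i+1}) \leq r(c_i)$ follows from $c_{i+1}$ being the $\prec_r$-minimum of a component containing $c_i$. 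Hence each intermediate set, combined with $c_{i+1}$, is an independent set of $H_b$, and $c_{i+1}$ remains the untouched leftmost vertex throughout.

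The main obstacle is stage (iii): after driving the first token all the way to $c_{i+1}$, one must undo the tail's rightward push without conflicting with the new leftmost vertex. The key point is that $c_{i+1}$ is selected as the $\prec_r$-minimum of its component, which yields the crucial inequality $r(c_{i+1}) \leq r(c_i)$ and hence $V(H_{c_i}) \subseteq V(H_{c_{i+1}})$; this is precisely what guarantees that every vertex visited during the reverse tail sequence lies safely to the right of $c_{i+1}$. Concatenating the sub-reconfigurations for $i = 0, \ldots, t-1$ produces the desired reconfiguration from $I$ to $(I \setminus u) \cup c_t$ in $TS_k(H_b)$.
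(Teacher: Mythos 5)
Your proof is correct and follows essentially the same three-step induction as the paper: push the tail rightward in $H_{c_i}$ using the definition of $w_r$, slide the head from $c_i$ to $c_{i+1}$ through $H_b$ restricted to the left of the new second token, then reverse the tail push with the head resting on $c_{i+1}$ (where the paper invokes Lemma~\ref{lem:totheleft} and you give the equivalent direct argument using $r(c_{i+1}) \le r(c_i)$, hence $V(H_{c_i}) \subseteq V(H_{c_{i+1}})$). Your phrasing of stage (i) in terms of $RM(I\setminus u,H_{c_i}) \succeq_l j_{i+1}$ rather than asserting the leftmost vertex can be pushed exactly to $j_{i+1}$ is slightly more careful than the paper's wording, but the underlying argument is the same.
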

\begin{proof}
Let us prove by induction on $\ell \leq t$ that $I$ and $I_\ell:=(I \setminus u) \cup c_\ell$ are in the same connected component of $TS_k(H_b)$.
The proof is immediate for $\ell=0$ since $I_0=I$.
Assume now that $I_{\ell}$ and $I_0$ are in the same connected component of $TS_k(H_b)$.
Let us prove that $I_{\ell}$ and $I_{\ell+1}$ also are in the same connected component of $TS_k(H_b)$, which will conclude the proof.
Note that $I_\ell \setminus c_\ell$ is an independent set of size $k-1$ contained in the graph $H_{u}$.
By definition of $w_r(V(H_{u}),k-1,H_{c_\ell})$, any independent set $K$ of size $k-1$ with leftmost vertex $v$ satisfying $l(v) > r(u)$ can be slided to an independent set with leftmost vertex $w_r(V(H_{c_0}),k-1,H_{c_\ell})$ in $H_{c_\ell}$.
We can in particular do it for the independent set $I_{\ell} \setminus c_\ell$.
We then reach the independent set $J_\ell$.
By definition of $J_\ell$, it is possible to slide $c_\ell$ to  $c_{\ell+1}$ without sliding any other token since $c_\ell$ and $c_{\ell+1}$ are in the same connected component of $H_b^{lm(J_\ell)}$.
This is enough to connect them by Lemma~\ref{clm:taille1}. Since $c_{\ell+1} \prec_r c_\ell$, Lemma~\ref{lem:totheleft} ensures that the reverse of the sequence $\mathcal{P}$ transforms $J_\ell \cup c_{\ell+1}$ into $I_{\ell+1}$, which concludes the proof.
\end{proof}

\begin{claim}\label{clm:superieur}
 We have $w_r(u,k,H_b) \succ_r d$.
\end{claim}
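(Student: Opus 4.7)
The plan is to show, by induction on $k$, that for every independent set $I$ of size $k$ in $H_b$ with leftmost vertex $u$, the component $\mathcal{C}_{H_b,k}(I)$ contains an independent set whose leftmost vertex is $d$. Minimising over all such $I$ then yields the required lower bound on $w_r(u,k,H_b)$.

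\textbf{Step 1.} I first apply Claim~\ref{clm:superieurmodif} to reduce $I$ to $I' := (I \setminus \{u\}) \cup \{c_t\}$, which lies in $\mathcal{C}_{H_b,k}(I)$. Here $c_t$ is the final value of $c$ in Procedure~\ref{alg:calculw} (so $c_{t-1}=c_t$), and $j := w_r(V(H_u),k-1,H_{c_t})$ is the corresponding value of the variable $j$ in the last iteration.

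\textbf{Step 2.} The set $I' \setminus \{c_t\}$ is a $(k-1)$-independent set of $H_u$, hence of $H_{c_t}$, whose leftmost vertex belongs to $V(H_u)$. By the inductive hypothesis applied at level $k-1$ together with the definition of $w_r(V(H_u),k-1,H_{c_t})$, we can reconfigure $I' \setminus \{c_t\}$ inside $H_{c_t}$ into a $(k-1)$-independent set $K$ with $j \preceq_l lm(K)$. Every move of this reconfiguration takes place strictly to the right of $c_t$, so Lemma~\ref{lem:totheleft} lifts it into a valid reconfiguration in $H_b$ that never touches the token on $c_t$, and we reach $K \cup \{c_t\}$ in $\mathcal{C}_{H_b,k}(I)$.

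\textbf{Step 3.} By the last line of Procedure~\ref{alg:calculw}, $d$ is reachable from $c_t$ inside $H_b^j$, so Lemma~\ref{clm:taille1} provides a walk from $c_t$ to $d$ in $H_b^j$. Every vertex $v$ on this walk satisfies $r(v) < l(j) \leq l(lm(K))$, hence $v$ is disjoint from every interval of $K$. Sliding the token at $c_t$ along this walk is therefore a valid reconfiguration in $TS_k(H_b)$, and yields $K \cup \{d\}$; its leftmost vertex is $d$ since $d \prec_l j \preceq_l lm(K)$. Combining the three reductions gives an element of $\mathcal{C}_{H_b,k}(I)$ with leftmost vertex $d$, so $d \preceq_l RM(I,H_b)$, and taking the minimum over $I$ gives the claim.

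The main obstacle will be Step 2: one must phrase the inductive hypothesis carefully (for a set of possible leftmost vertices rather than a single vertex, which is exactly why the algorithm extends $w_r$ to sets), and check that Lemma~\ref{lem:totheleft} genuinely lifts the $(k-1)$-reconfiguration in $H_{c_t}$ back to $H_b$ without disturbing the static token at $c_t$.
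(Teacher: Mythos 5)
Your proposal is correct and follows essentially the same route as the paper's proof: first invoke Claim~\ref{clm:superieurmodif} to replace $u$ by $c_t$, then use the definition of $w_r(V(H_u),k-1,H_{c_t})$ (implicitly justified by induction on $k$) together with Lemma~\ref{lem:totheleft} to push the other $k-1$ tokens past $j$, and finally slide the token on $c_t$ through $H_b^j$ to $d$ using Lemma~\ref{clm:taille1}. If anything, your Step~2 is phrased slightly more carefully than the paper's, since you observe that one reaches a set $K$ with $j \preceq_l lm(K)$ rather than asserting the leftmost vertex is exactly $j$, which is the precise statement needed for Step~3 to go through.
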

\begin{proof}
By Claim~\ref{clm:superieurmodif}, any independent set of size $k$ with leftmost vertex $u$ can be transformed into the same independent set where the leftmost vertex is replaced by $c_t$. By definition of $w_r(V(H_{u}),k-1,H_{c_t})$, any independent set $K$ of size $k-1$ with a leftmost vertex $w$ satisfying $l(w) > r(u)$ can be slided to an independent set with leftmost vertex $w_r(V(H_{u}),k-1,H_{c_t})$. Thus there is an independent set with leftmost vertex $c_t$ and second leftmost vertex $w_r(V(H_{u}),k-1,H_{c_t})$. By definition of $d$ and by Lemma~\ref{clm:taille1}, $c_t$ can be slided to a vertex at least equal to $d$ (in the $\prec_l$ order) without any modification of the other vertices. Thus, for any independent set $I$ of size $k$ starting on $u$ in $H_b$, we have $RM(I,H_b) \succ_r d$. So finally we obtain $w_r(u,k,H_b) \succ_r d$.
\end{proof}

Let us now prove that there exists an independent set $I$ for which $RM(I,H_b)=d$.

\begin{claim}\label{clm:inferieur}
There exists an independent set $I$ of size $k$ with leftmost vertex $u$ for which $RM(I,H_b) \prec_r d$.
\end{claim}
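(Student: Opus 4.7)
The plan is to exhibit an adversarial independent set $I$ and prove by induction along any reconfiguration sequence starting at $I$ that the leftmost token is trapped in a connected component whose $\prec_l$-rightmost vertex is exactly $d$. Let $v^*\in V(H_u)$ be a vertex achieving the minimum $w_r(v^*,k-1,H_{c_t})=j_t$, where $j_t$ is the value of $j$ computed by Procedure~\ref{alg:calculw} at its final iteration, and let $I^*$ be an independent set of size $k-1$ with leftmost vertex $v^*$ that realises $RM(I^*,H_{c_t})=j_t$. Because $I^*$ is ordered and starts at $v^*\in V(H_u)$, we have $I^*\subseteq V(H_u)$, so $I:=\{u\}\cup I^*$ is a valid independent set of size $k$ in $H_b$ with leftmost vertex $u$. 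Let $\mathcal{K}$ be the connected component of $u$ in $H_b^{j_t}$; since the $j_i$ are $\prec_l$-non-decreasing we have $H_b^{j_i}\subseteq H_b^{j_t}$ for every $i$, and iterating the loop shows that $c_t$ is the $\prec_r$-leftmost vertex of $\mathcal{K}$ while $d$ is its $\prec_l$-rightmost vertex.

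Now consider any reconfiguration sequence $I=M_0,M_1,\ldots$ in $TS_k(H_b)$; write $m_s=lm(M_s)$ and $c^*_s=\min_{\prec_r}\{m_0,\ldots,m_s\}$. The heart of the proof is a simultaneous induction on $s$ establishing (a) $m_s\in\mathcal{K}$ and (b) $c^*_s\succeq_r c_t$. Both hold at $s=0$ since $m_0=u=c_0\succeq_r c_t$. For the inductive step, if the slide from $M_s$ to $M_{s+1}$ does not move the leftmost token, the claims are trivially preserved by Lemma~\ref{lem:nonpermtoken}. Otherwise, writing $m_2^{(s)}$ for the second leftmost vertex of $M_s$, Lemma~\ref{lem:totheleft2} applied to the subsequence $I\to M_s$ provides a reconfiguration of $I^*$ into $M_s\setminus m_s$ inside $H_{c^*_s}$, whence $m_2^{(s)}\preceq_l RM(I^*,H_{c^*_s})\preceq_l RM(I^*,H_{c_t})=j_t$, the last inequality using $H_{c^*_s}\subseteq H_{c_t}$ granted by inductive hypothesis~(b). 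Since $m_{s+1}$ is adjacent to $m_s$ and non-adjacent to $m_2^{(s)}$, we obtain $m_{s+1}\in H_b^{m_2^{(s)}}\subseteq H_b^{j_t}$, and the edge $m_sm_{s+1}$ lies in $H_b^{j_t}$; thus $m_{s+1}\in\mathcal{K}$, and $m_{s+1}\succeq_r c_t$ by the $\prec_r$-minimality of $c_t$ in $\mathcal{K}$, which closes the induction. Consequently every reachable $lm(J)$ lies in $\mathcal{K}$, giving $RM(I,H_b)\preceq_l d$.

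The main obstacle I anticipate is the apparent circularity in bounding the second token: to control $m_2^{(s)}\preceq_l j_t$ one needs $c^*_s\succeq_r c_t$ so that $H_{c^*_s}\subseteq H_{c_t}$, but controlling $c^*_s$ is precisely what the induction is producing. Threading both claims together in a single inductive pass, anchored at the trivial base case $m_0=u=c_0\succeq_r c_t$ and closed by the $\prec_r$-minimality of $c_t$ inside the fixed component $\mathcal{K}$, is what makes the argument go through; everything else is essentially a bookkeeping translation of Procedure~\ref{alg:calculw}'s fixed-point structure into constraints on the reconfiguration.
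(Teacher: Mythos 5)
Your proof is correct and follows essentially the same approach as the paper: both construct the adversarial set $I = \{u\} \cup I^*$ (the paper's $I_t \setminus c_t$ is your $I^*$) and show that the fixed-point structure of Procedure~\ref{alg:calculw} confines the leftmost token to the connected component $\mathcal{K}$ of $u$ in $H_b^{j_t}$, using Lemmas~\ref{lem:nonpermtoken} and~\ref{lem:totheleft2} to bound the second token. The only cosmetic difference is that you run a direct forward induction while the paper argues by contradiction on a first violating independent set in the sequence.
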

\begin{proof}
Let $I_t$ be an independent set such that the leftmost vertex of $I_t$ is $c_t$ and all the other vertices of $I_t$ are in $H_{u}$ that minimizes $RM(I_t \setminus c_t, H_{c_t})$. Let us prove that $I= (I_t \setminus c_t) \cup u$ is an independent set of size $k$ with leftmost vertex $u$ such that $RM(I,H_b)=d$.

Let us prove the following stronger statement: any independent set $S$ in the connected component of $I_t$ in $TS_k(H_b)$ satisfies:
\begin{enumerate}[(i)]
 \item the leftmost vertex $u$ of $S$ satisfies $c_t \prec_r u$ and,
 \item the second leftmost vertex $v$ of $S$ satisfies $u \prec_l w_r(V(H_{u}),k-1,H_{c_t})$.
\end{enumerate}
Assume by contradiction that an independent set $S$ in the component of $I$ in $TS_k(H_b)$ does not satisfy (i) or (ii). Free to modify $S$, we can assume that every independent set in the sequence between $I$ and $S$ satisfies (i) and (ii).

Assume first that (i) does not hold. By minimality of $S$, the second leftmost vertex of $S$, denoted by $w$ satisfies $w \prec_l w_r(V(H_u),k-1,H_{c_t})$.
Let us denote by $S'$ the independent set before $S$ in the sequence. The vertex $w$ appears in both $S$ and $S'$ since only the leftmost vertex is slided between $S$ and $S'$ by Lemma~\ref{lem:nonpermtoken}. Thus $lm(S')$ can be slided from a vertex $y \succ_r c_t$ to a vertex $x \prec_r c_t$ that is not incident to $w$. In particular $y$ or $x$ must be incident to $c_t$ and both vertices are not incident to $w \prec_l w_r(V(H_u),k-1,H_{c_t})$. Thus the vertex $c_t$ can be slided to a smaller vertex (for $\prec_r$) than $c_t$ in $H_b^{w_r(V(H_u),k-1,H_{c_t})}$. Thus $c_{t+1}$ must be distinct from $c_t$, a contradiction with the fact that the algorithm stops at time $t$.

Assume now that (ii) is not satisfied in $S$. By minimality of $S$ in the sequence, the leftmost vertex of $S$ is larger than or equal to $c_t$. By Claim~\ref{clm:superieurmodif}, it is possible to slide the independent set $I$ into the independent set $I \setminus c_0 \cup c_t$. Let $\mathcal{P}$ be the sequence of moves that transform $I$ into $S$. By hypothesis on $S$, for any intermediate independent set $K$ we have $c_t \prec_r lm(K)$. Thus if we denote by $\mathcal{P}'$ the sequence of moves $\mathcal{P}$ where moves of the first token deleted, Lemma~\ref{lem:totheleft} ensures that the sequence $\mathcal{P}'$ also provides a transformation of $(I \setminus u) \cup c_t$ into an independent set where the second leftmost vertex is larger than $w_r(V(H_u),k-1,H_{c_t})$. Thus it is by Lemma~\ref{lem:totheleft} a transformation from $I \setminus u$ to $S$ without its leftmost vertex in $H_{c_0}$, a contradiction with the fact that $RM(I \setminus u,H_{c_t})= w_r(V(H_u),k-1,H_{c_t})$ is strictly smaller than the leftmost vertex of $S$ without its leftmost vertex (for $\prec_l$).
\end{proof}

The lemma is just a consequence of Claims~\ref{clm:superieur} and~\ref{clm:inferieur}.
\end{proof}

Knowing that Procedure~\ref{alg:calculw} outputs $w_r(u,k,G_{a})$, we can show that the second output value is $w_l(u,k,G_{a})$.
\begin{lemma}\label{lem:calculwl}
The value $w_l$ returned by Procedure~\ref{alg:calculw} is $w_l(u,k,H_{b})$.
\end{lemma}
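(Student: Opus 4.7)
The plan is to mirror the two-part argument used to prove Lemma~\ref{clm:procedure}: an upper bound $w_l(u,k,H_b)\preceq_r c_t$ and a matching lower bound $w_l(u,k,H_b)\succeq_r c_t$, where $(c_0,\dots,c_t)$ are the successive values of $c$ in Procedure~\ref{alg:calculw}. Both inequalities can be extracted directly from material already proved for the $w_r$-part, so essentially no new combinatorial content is needed.

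For the upper bound, I would observe that Claim~\ref{clm:superieurmodif} already shows that every independent set $I$ of size $k$ in $H_b$ with leftmost vertex $u$ lies in the same connected component of $TS_k(H_b)$ as $(I\setminus u)\cup c_t$. Since the sequence $c_0,\dots,c_t$ is non-increasing for $\prec_r$ (as noted at the beginning of the proof of Lemma~\ref{clm:procedure}) and $c_0=u$, the set $(I\setminus u)\cup c_t$ has leftmost vertex $c_t$. Therefore $LM(I,H_b)\preceq_r c_t$ for every such $I$, and taking the maximum yields $w_l(u,k,H_b)\preceq_r c_t$.

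For the lower bound, I would reuse the witness from the proof of Claim~\ref{clm:inferieur}: the independent set $I=(I_t\setminus c_t)\cup u$, where $I_t$ has leftmost vertex $c_t$, all remaining vertices in $H_u$, and minimizes $RM(I_t\setminus c_t,H_{c_t})$. The strengthened induction inside that proof already establishes property~(i): every independent set $S$ reachable from $I$ in $TS_k(H_b)$ satisfies $c_t\preceq_r lm(S)$. This is precisely $LM(I,H_b)\succeq_r c_t$, hence $w_l(u,k,H_b)\succeq_r c_t$.

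Combining the two inequalities gives $w_l(u,k,H_b)=c_t$, which is the value returned by the procedure. The only subtlety worth double-checking is the direction of the orders: pushing to the left is governed by $\prec_r$ rather than by $\prec_l$ (as for pushing right), but property~(i) in Claim~\ref{clm:inferieur} was stated in exactly this direction, so the argument is essentially bookkeeping on top of the previous lemma and no additional sliding sequences need to be constructed.
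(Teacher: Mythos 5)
Your proposal is correct and follows essentially the same strategy as the paper: the upper bound (that $c_t$ is not smaller than $w_l(u,k,H_b)$ for $\prec_r$) is derived exactly as you describe from Claim~\ref{clm:superieurmodif}, and the lower bound is established using the same witness independent set $I=(I_t\setminus c_t)\cup u$ whose reachable leftmost vertices are all $\succeq_r c_t$. The paper's proof redoes the contradiction argument rather than citing property~(i) of Claim~\ref{clm:inferieur}, but the underlying reasoning is identical, so your shortcut is a harmless streamlining.
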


\begin{proof}
Let $I$ be an independent set of $H_b$ with leftmost vertex $u$. Claim~\ref{clm:superieurmodif} ensures that at any step, the independent set $(I \cup c_t) \setminus u$ is in the connected component of $I$ in $TS_k(H_b)$. Thus the value $w_l$ output by the algorithm is not smaller than $w_l(u,k,H_b)$ (for the order $\prec_r$).

Let us now prove that  $w_l=w_l(u,k,H_b)$.
Let $I'$ be an independent set of size $k-1$ such that $RM(I',H_{w_l})=w_r(V(H_u),k-1,H_{w_l})$ and with leftmost vertex in $H_u$. By definition of $w_r(V(H_u),k,H_{w_l})$ such an independent set exists.
Let $I=I'\cup \{ u \}$.

Assume by contradiction that there exists an independent $J$ in the connected component of $I$ in $TS_k(H_a)$ such that the leftmost vertex $x$ of $J$ is $\prec_r$-smaller than $w_l$.
Since $RM(I \setminus \{ u \},H_{w_l})=w_r(V(H_u),k-1,H_{w_l})$, the second leftmost token $z$ satisfies at any step $l(z) \leq l(w_r(V(H_u),k-1,H_{w_l}))$ since at any step before $J$, the leftmost token is $\prec_r$-larger than $w_l$.

So Lemma~\ref{lem:totheright} ensures that if the leftmost token reaches the vertex $x$ at some point, then $x$ and $u$ are in the same connected component in $G^{w_r(v(H_u),k-1,H_{w_l})}_a$. Since $r(x) < r(w_l) \leq r(u)$, there also exists a path from $x$ to $w_l$.

But by definition of $w_l$, during the last loop of Procedure~\ref{alg:calculw}, the value of $c$ is not modified. At the beginning of the last loop the value of $c$ is $w_l$. And the value of the $c$ at the end of the loop is the $\prec_r$-leftmost vertex of $H^{w_r(v(H_u),k-1,H_{w_l})}_a$ that can be reached from $w_l$ and then from $u$.

Since $x$ can be reached and $r(x) < r(w_l)$, the value of $c$ is modified, contradicting the fact that the output is $w_l$.
\end{proof}

By combining Lemmas~\ref{clm:taille1},~\ref{clm:procedure} and~\ref{lem:calculwl}, we can compute in polynomial time all the values $w_r(u,k,H_b)$ and $w_l(u,k,H_b)$. Indeed, for any integer $k$, we only need to compute a cubic number of values (at most $k \cdot n^2$ to be precise). Moreover, each of these values can be computed in polynomial time for $k=1$ according to Lemma~\ref{clm:taille1} and for $k \geq 2$ according to Lemma~\ref{clm:procedure}.

To conclude the proof of Theorem~\ref{th:connec}, we just have to prove the following:

 \begin{lemma}\label{lem:polyTSk}
  Given all the values of $w_r(i,k,G_b)$ (for any two vertices $i,b$ and any integer $k$), we can determine in polynomial time whether $TS_k(G)$ is connected.
 \end{lemma}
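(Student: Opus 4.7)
The plan is to show that connectivity of $TS_k(G)$ reduces to a recursive check anchored on the leftmost independent set of $G$, with each level tested by constant-time queries to the $w_l$-table (which, by Lemma~\ref{lem:calculwl}, is also produced by Procedure~\ref{alg:calculw} alongside the $w_r$-table).

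First I would compute the leftmost independent set $(l_1,\ldots,l_p)$ of $G$ greedily. If $p<k$, Lemma~\ref{lem:leftmost} ensures that $TS_k(G)$ has no vertex and is vacuously connected. Otherwise I would establish the following criterion: $TS_k(G)$ is connected if and only if (a) $w_l(u,k,G)=l_1$ for every vertex $u$ with $w_l(u,k,G)\neq +\infty$, and (b) $TS_{k-1}(G_{l_1})$ is connected. This turns the test into a recursion of depth at most $k$, each level requiring $O(n)$ table lookups plus the greedy computation of $l_j$, for a polynomial total running time.

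To justify the criterion, I would use the fact that $l_1$ globally minimizes $r(\cdot)$ in $G$, which has two consequences: any independent set containing $l_1$ has $l_1$ as its leftmost vertex, and $l_1$ is the $\prec_r$-minimum value $w_l(u,k,G)$ can possibly take. Condition (a) is then exactly the statement that every size-$k$ independent set can be reconfigured into one containing $l_1$, which is clearly necessary for connectivity (since any canonical target such as $\{l_1,\ldots,l_k\}$ must be reachable and contains $l_1$). Given (a), every pair $I,J$ of size-$k$ independent sets can be pushed to sets $I',J'$ both containing $l_1$; it then suffices to decide whether $I'$ and $J'$ lie in the same component of $TS_k(G)$. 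I would prove that this happens if and only if $I'\setminus\{l_1\}$ and $J'\setminus\{l_1\}$ lie in the same component of $TS_{k-1}(G_{l_1})$, so that connectivity of $TS_k(G)$ reduces to that of $TS_{k-1}(G_{l_1})$, which is precisely condition (b).

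The equivalence between the two components has a lift and a projection. The lift uses Lemma~\ref{lem:totheleft} verbatim: any reconfiguration in $G_{l_1}$ lifts to a reconfiguration in $G$ that never slides $l_1$. The projection is the main technical step: given a reconfiguration of $I'$ into $J'$ in $G$, Lemma~\ref{lem:totheleft2} delivers a reconfiguration of $I'\setminus\{l_1\}$ into $J'\setminus\{l_1\}$ in $G_w$, where $w$ is the $\prec_r$-smallest leftmost token along the sequence; the key observation is that $w=l_1$, because $l_1$ globally minimizes $r$ and is already the leftmost token of $I'$. The hardest part, as expected, lies in pinning down $w=l_1$ via the ordering-of-tokens lemmas (Lemmas~\ref{lem:nonpermtoken}--\ref{lem:totheleft2}) and ensuring that the cascade of invocations composes cleanly through the recursion; once this is done, the complexity analysis is immediate from the $w_l$-table being precomputed.
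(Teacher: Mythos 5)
Your proposal is correct and follows essentially the same approach as the paper: the paper states the criterion in unrolled form, namely $TS_k(G)$ is connected if and only if $w_l(a,k-i,G_{\ell_i})=\ell_{i+1}$ for all $i$ and all relevant $a\in G_{\ell_i}$, which is precisely what your recursion yields once expanded, and the paper likewise justifies the projection direction via Lemma~\ref{lem:totheleft2} (with $w=\ell_1$ since $\ell_1$ is the global $\prec_r$-minimum) and the lift direction via Lemma~\ref{lem:totheleft}. The only cosmetic difference is recursive versus iterative phrasing.
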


The remaining of this section is devoted to proving Lemma~\ref{lem:polyTSk}. Let us prove that we can determine in polynomial time if all independent sets can be transformed into the leftmost independent set of corresponding size in polynomial time.

Let $G$ be an interval graph. Let $(\ell_1,\ldots,\ell_{\alpha(G)})$ be the leftmost independent set of $G$. Though $\ell_0$ is not defined, we set $G_{\ell_0}=G$.

\begin{lemma}
 The graph $TS_k(G)$ is connected if and only if, for every $i \leq k$ and for every $a \in G_{\ell_i}$, we have $w_l(a,k-i,G_{\ell_i})=\ell_{i+1}$.
\end{lemma}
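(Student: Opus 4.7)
The plan is to reduce both directions to the common intermediate goal: $TS_k(G)$ is connected if and only if every $k$-independent set of $G$ can be reconfigured to the ``canonical'' target $L_k := \{\ell_1,\ldots,\ell_k\}$, the first $k$ vertices of the leftmost independent set. The condition on the $w_l$ values is then an inductive peeling of $L_k$ one token at a time.

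For the forward direction, suppose $TS_k(G)$ is connected. Fix $i$ with $0 \le i \le k-1$, a vertex $a \in G_{\ell_i}$, and an arbitrary independent set $I^*$ of size $k-i$ in $G_{\ell_i}$ with leftmost vertex $a$. Set $I := \{\ell_1,\ldots,\ell_i\} \cup I^*$; since every vertex of $I^*$ has left extremity greater than $r(\ell_i) \ge r(\ell_j)$ for $j \le i$, this is a valid $k$-independent set in $G$. By connectivity, there is a reconfiguration $I \to L_k$ in $G$. Iterating Lemma~\ref{lem:totheleft2} exactly $i$ times (as explicitly allowed by the remark following that lemma), one extracts a reconfiguration from $I^*$ to $\{\ell_{i+1},\ldots,\ell_k\}$ inside $G_W$, where $W$ is the $\prec_r$-smallest value reached by the $i$-th leftmost token along the original sequence. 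Lemma~\ref{lem:left} applied in $G$ forces $W \succeq_r \ell_i$, hence $G_W \subseteq G_{\ell_i}$. This witnesses $LM(I^*, G_{\ell_i}) \preceq_r \ell_{i+1}$, while Lemma~\ref{lem:left} applied inside $G_{\ell_i}$ (whose leftmost independent set starts with $\ell_{i+1}$) gives the reverse inequality, so $LM(I^*, G_{\ell_i}) = \ell_{i+1}$. Since $I^*$ was arbitrary with leftmost vertex $a$, $w_l(a,k-i,G_{\ell_i}) = \ell_{i+1}$.

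For the backward direction, assume the $w_l$ condition holds and let $I$ be any $k$-independent set in $G$. I proceed inductively on $i = 0,1,\ldots,k-1$, maintaining the invariant that after iteration $i$ the current independent set is $\{\ell_1,\ldots,\ell_i\} \cup J_i$ with $J_i$ a $(k-i)$-independent set in $G_{\ell_i}$; the base case $i=0$ uses $G_{\ell_0}=G$. At iteration $i$, let $a := lm(J_i)$. The hypothesis $w_l(a,k-i,G_{\ell_i}) = \ell_{i+1}$ means that every such $J_i$ can be reconfigured inside $G_{\ell_i}$ to an independent set $J_i'$ whose leftmost vertex is $\ell_{i+1}$. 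Because the vertices of $G_{\ell_i}$ are all non-adjacent to $\{\ell_1,\ldots,\ell_i\}$, appending these $i$ fixed tokens to every step of this reconfiguration gives a legal reconfiguration in $G$ that keeps them in place. Writing $J_i' = \{\ell_{i+1}\} \cup J_{i+1}$ restores the invariant at step $i+1$. After $k$ iterations the current independent set equals $L_k$, so $I \in \mathcal{C}_{G,k}(L_k)$ and $TS_k(G)$ is connected.

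The only delicate point is the faithful passage between reconfigurations in $G$ and in its subgraphs $G_{\ell_i}$. Lemma~\ref{lem:totheleft2} (forward) and the trivial lifting by appending fixed leftmost tokens (backward) provide exactly these two translations, and Lemma~\ref{lem:left} is what guarantees, in the forward direction, that the graph $G_W$ produced by $i$-fold extraction is actually contained in $G_{\ell_i}$. Given these tools, the rest is bookkeeping.
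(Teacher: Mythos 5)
Your proof is correct and takes essentially the same approach as the paper: both directions reduce to the statement that $TS_k(G)$ is connected iff every $k$-independent set reaches the canonical target $\{\ell_1,\ldots,\ell_k\}$, and the $w_l$ condition is obtained/used by peeling off leftmost tokens one at a time. If anything your forward direction is slightly more careful than the paper's, which cites Lemma~\ref{lem:totheleft} where the iterated Lemma~\ref{lem:totheleft2} together with Lemma~\ref{lem:left} (to pin $G_W$ inside $G_{\ell_i}$) is what the argument really needs.
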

\begin{proof}
First assume that for every $i \leq k$ and for every $a \in G_{\ell_i}$, we have $c(a,k-i,G_{\ell_i})=\ell_{i+1}$. Let $I=\{ u_1,\ldots,u_k\}$ be an independent set of size $k$. Let us prove that $I$ can be slided to the leftmost independent set $\{\ell_1,\ldots,\ell_k\}$.

In order to show it, let us prove by induction on $k$ that $I$ can be slided to $I_k= \{ \ell_1,\ldots,\ell_k,u_{i+1},\ldots,u_k \}$. Note that $I_0=I$ and then the conclusion holds for $i=0$. Assume now that the conclusion is true for some $i<k$ and let us prove that it holds for $i+1$.
Since $c(a,k-i,G_{\ell_i})=\ell_{i+1}$, there exists in the connected component of $I_i \setminus \{ \ell_1,\ldots,\ell_i\}$ in $TS_{k-i}(G_{\ell_i})$ an independent set with leftmost vertex $\ell_{i+1}$. Let us denote by $J_{i+1}$ this independent set.
The vertex $\ell_{i+1}$ is the leftmost possible vertex in $G_{\ell_i}$. Thus Lemma~\ref{lem:totheleft} ensures that the transformation of $I_i$ into $J_{i+1}$ can be transformed into a transformation of $I_i \cup \{ \ell_{i+1} \} \setminus \{ u_{i+1} \}$ into $J_{i+1} \setminus \ell_{i+1}$. The reverse of this transformation transforms $J_{i+1}$ into  $I_i \cup \{ \ell_{i+1} \} \setminus \{ u_{i+1} \}$. Thus by adding the vertices $\{\ell_1,\ldots,\ell_i\}$ in the independent set, we have a transformation from $I_i$ to $I_{i+1}$, and then the result holds.
\smallskip

Let us now prove that if  $TS_k(G)$ is connected then for every $i \leq k$ and for every $a \in G_{\ell_i}$, we have $c(a,k-i,G_{\ell_i})=\ell_{i+1}$. Let $J$ be an independent set whose of size $k-i$ with leftmost vertex $a$. Let $I$ be the independent set $J$ plus the $i$-th first leftmost vertices, i.e. $I=J \cup \{ \ell_1,\ldots,\ell_i \}$. Since $I$ can be slided into $\{ \ell_1,\ldots,\ell_k \}$ in $TS_k(G)$, Lemma~\ref{lem:totheleft} ensures that $J=I \setminus \{ \ell_1,\ldots,\ell_i \}$ can be slided into $\{ \ell_{i+1},\ldots,\ell_k \}$. Thus there exists, in the connected component of $J$ of $TS_{k-i-1}(G_{\ell_i})$ an independent set with leftmost vertex $\ell_{i+1}$. Since this holds for any independent set $J$ of $G_{\ell_i}$ of size at most $k-i-1$, we have $c(a,k-i,G_{\ell_i})=\ell_{i+1}$.
\end{proof}

Since all the values $w_l(a,k-i,G_{\ell_i})$ can be computed in polynomial time, Lemma~\ref{lem:polyTSk}, and then Theorem~\ref{th:connec} holds.

\subsection{Transformation between two independent sets}

Using similar techniques, we can decide in polynomial whether two independent sets $I$ and $J$ are in the same connected component of $TS_k(G)$ where $G$ is an interval graph.
We compute the values $LM(I,G_a)$ and $RM(I,G_a)$ for some fixed independent sets instead of computing the values $w_l$ and $w_r$. We prove that we only have to compute these values for a polynomial number of independent sets. We then check whether $LM(I,G)=LM(J,G)$ and if so, prove we can reduce the problem to deciding whether some sets $I',J'$ are in the same connected component of $TS_{k-1}(G')$ where $G'$ is a subgraph of $G$. 

Let us descrive the proof more formally. Let $I$ be an independent set. A subset $J$ of $I$ is a \emph{right subset} of $I$ if, when $J$ contains a vertex $x \in I$, then $J$ also contains all the vertices of $I$ larger than $x$ (for both $\prec_r$ and $\prec_l$). Let us prove that we can adapt Procedure~\ref{alg:calculw} in order to compute $RM(J,H_b)$ and $LM(J,H_b)$ for any right subset $J$ of $I$ included in $H_b$. This procedure, called Procedure~\ref{alg:calculw2}, is described below.

We first compute the values corresponding to the right subset corresponding to the single rightmost vertex of $I$. The values $RM(J,H_b)$ and $LM(J,H_b)$ can be computed in linear time (in the number of edges) if $J$ has size one by Lemma~\ref{clm:taille1}. We then apply Procedure~\ref{alg:calculw2} to compute the values for increasingly large right subsets.

\begin{lemma}\label{clm:procedure2}
Let $H$ be an interval graph and $I$ an independent set of $H$. Procedure~\ref{alg:calculw2} returns $LM(J,H_b)$ and $RM(J,H_b)$ with input $J,k,H,b$ where $J$ is a right subset of $I$.
\end{lemma}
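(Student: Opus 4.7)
The plan is to prove Lemma~\ref{clm:procedure2} by induction on $|J|$, following the same scheme as the proof of Lemma~\ref{clm:procedure} and Lemma~\ref{lem:calculwl}, with one key simplification: since $J$ is a fixed right subset of $I$ rather than a worst case over all independent sets with a given leftmost vertex, the quantity $w_r(V(H_u),k-1,H_c)$ appearing in Procedure~\ref{alg:calculw} can be replaced throughout by $RM(J \setminus \{lm(J)\}, H_c)$. This is well-defined inductively because $J \setminus \{lm(J)\}$ remains a right subset of $I$ (removing the minimum element of $J$ preserves the closure property on the $\succ$-side), and it is strictly smaller than $J$. The base case $|J|=1$ is handled directly by Lemma~\ref{clm:taille1}, giving $RM(\{v\},H_b)$ as the $\prec_l$-maximum vertex in the connected component of $v$ in $H_b$ and $LM(\{v\},H_b)$ as the $\prec_r$-minimum one.

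For the inductive step, I would let $(c_0,c_1,\ldots,c_t)$ be the successive values of the variable $c$ during Procedure~\ref{alg:calculw2}, with $c_{t-1}=c_t$, and let $j = RM(J \setminus \{lm(J)\}, H_{c_t})$ be the final value used inside the loop. I would first prove the analogue of Claim~\ref{clm:superieurmodif}: by induction on $\ell$, the independent set $(J \setminus \{lm(J)\}) \cup \{c_\ell\}$ is in the connected component of $J$ in $TS_{|J|}(H_b)$. The step from $\ell$ to $\ell+1$ uses the inductive hypothesis on $J\setminus\{lm(J)\}$ to first push the $|J|-1$ right tokens to an independent set whose leftmost vertex is $RM(J \setminus \{lm(J)\}, H_{c_\ell})$, then slides $c_\ell$ to $c_{\ell+1}$ inside the connected component they share in $H_b^{j_\ell}$ (appealing to Lemma~\ref{clm:taille1}), and finally reverses the rightward push using Lemma~\ref{lem:totheleft} since $c_{\ell+1} \prec_r c_\ell$.

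Next, I would establish the $RM$ part by two matching bounds, exactly as in Claims~\ref{clm:superieur} and~\ref{clm:inferieur}. The upper bound $RM(J,H_b) \succeq_l d$ follows by composing: (1) the reachability of $(J \setminus \{lm(J)\}) \cup \{c_t\}$ from Step~1; (2) the inductive $RM$ value to slide the right tokens until their leftmost is $j$; and (3) Lemma~\ref{clm:taille1} to slide $c_t$ to $d$ inside $H_b^j$. The matching lower bound argues by contradiction: for a well-chosen witness $I = (I_t \setminus \{c_t\}) \cup \{lm(J)\}$, any reachable independent set $S$ must satisfy (i) $lm(S) \succeq_r c_t$ and (ii) its second leftmost vertex is $\preceq_l j$. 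If (i) fails, then Lemma~\ref{lem:nonpermtoken} gives a slide of $c_t$ to a strictly $\prec_r$-smaller vertex inside $H_b^{j}$, which would have forced a further iteration of the while loop, contradicting $c_{t-1}=c_t$. If (ii) fails, stripping the leftmost token from the reconfiguration sequence via Lemma~\ref{lem:totheleft} contradicts the inductively known value $RM(J \setminus \{lm(J)\}, H_{c_t}) = j$.

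Finally, the $LM$ output is handled as in Lemma~\ref{lem:calculwl}: the direction $w_l \succeq_r LM(J,H_b)$ is immediate from Step~1, and the reverse inequality uses Lemma~\ref{lem:totheright} — if some reachable $S$ had $lm(S) \prec_r c_t$, then the $\prec_l$-largest second leftmost vertex along the sequence is bounded by $j$, so $lm(S)$ and $c_t$ are connected in $H_b^{j}$, which would have driven $c$ strictly leftward in the final loop iteration and again contradict termination. The main subtlety I expect is keeping the bookkeeping between $c_t$, $j$, and the inductive $RM$ value aligned so that Lemmas~\ref{lem:totheleft} and~\ref{lem:totheright} apply to the correct restricted subgraphs; everything else is a direct translation of the arguments already carried out for Procedure~\ref{alg:calculw}, and the polynomial running time follows because each iteration strictly decreases $c$ in the finite order $\prec_r$.
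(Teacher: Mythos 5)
Your proposal follows essentially the same route as the paper: prove the analogue of Claim~\ref{clm:superieurmodif} (that $J$ and $(J\setminus\{lm(J)\})\cup\{c_t\}$ are reconfigurable) by induction on the loop counter, then show the output values are achievable and tight, reducing everything to the inductively known $RM$ of the smaller right subset $J\setminus\{lm(J)\}$, exactly as the paper's Claims~\ref{clm:superieurmodif2} and~\ref{clm:tight2} do. Your phrasing of the inductive step is actually a touch cleaner than the paper's (explicitly lifting the rightward push $\mathcal{P}$ and reversing it via Lemma~\ref{lem:totheleft} rather than the somewhat implicit appeal to Lemma~\ref{lem:totheleft2} in the paper). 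The one wrinkle is the sentence mentioning ``a well-chosen witness $I=(I_t\setminus\{c_t\})\cup\{lm(J)\}$'': this step carries over from Claim~\ref{clm:inferieur}, where $w_r$ is a minimum over all independent sets with a given leftmost vertex and hence requires exhibiting a worst-case starting set, but here the starting set $J$ is fixed, so no witness is needed -- you simply argue about $J$ and its reachable configurations directly. Since the rest of your argument (conditions (i) and (ii) on any reachable $S$, the contradiction via a further iteration of the loop) is stated for $J$ itself, this is a surface-level slip that does not affect correctness.
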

 \begin{algorithm}[h!]
\caption{Computing $LM(J,H_b)$ and $RM(J,H_b)$ assuming constant-time access to values for proper right subsets of $J$.}
\label{alg:calculw2}
\begin{algorithmic}[PERF]
\STATE $u:=lm(J)$.
  \STATE $c:=u$.
  \STATE $r:=+\infty$.
  \WHILE{$r \neq c$}
  \STATE $r=c$.
  \STATE $j= RM(J \setminus u, H_c)$
  \STATE $c=$ the leftmost vertex (for $\prec_r$) in the connected component of $c$ in $H^j_b$.
  \ENDWHILE
  \RETURN $w_l=c$ and $w_r=$ the maximal vertex (for $\prec_l$) that can be reached from $c$ in $H_b^{j}$.
\end{algorithmic}
\end{algorithm}
\begin{proof}
The proof technique follows from the proof of Lemma~\ref{clm:procedure}. Let us denote by $c_t$ the value of $c$ after $t$ steps in the while loop. Let us first prove the following claim.

\begin{claim}\label{clm:superieurmodif2}
Both $J$ and $(J \setminus u) \cup c_t$ are in the same connected component of $TS_k(H_b)$.
\end{claim}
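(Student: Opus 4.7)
The plan is to mirror almost verbatim the inductive argument of Claim~\ref{clm:superieurmodif}, adapted to right subsets rather than to independent sets with a specified leftmost vertex. Denote by $c_0 = u, c_1, \dots, c_t$ the successive values of the variable $c$ in Procedure~\ref{alg:calculw2} (so in particular $c_{t-1} = c_t$), and by $j_\ell = RM(J \setminus u, H_{c_\ell})$ the value of $j$ computed at iteration $\ell$. I will prove by induction on $\ell$ that $J_\ell := (J \setminus \{u\}) \cup \{c_\ell\}$ lies in the connected component of $J$ in $TS_k(H_b)$. The base case $\ell = 0$ is immediate since $J_0 = J$.

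For the inductive step, suppose $J$ reaches $J_\ell$. Because $J$ is a right subset of $I$ with leftmost vertex $u$ and the $c_i$'s form a $\prec_r$-non-increasing sequence by construction of the procedure, $J \setminus u$ is a proper right subset of $J$ entirely contained in $H_u \subseteq H_{c_\ell}$. I can therefore invoke the constant-time access hypothesis for $RM(J \setminus u, H_{c_\ell}) = j_\ell$: there is a reconfiguration sequence $\mathcal{P}$ in $H_{c_\ell}$ transforming $J \setminus u$ into some independent set $J'$ whose leftmost vertex is $j_\ell$. Every intermediate set of $\mathcal{P}$ lies in $H_{c_\ell}$, so its vertices have left extremities strictly larger than $r(c_\ell)$; prepending $c_\ell$ therefore preserves independence, and $\mathcal{P}$ lifts directly to a sequence in $H_b$ reconfiguring $J_\ell$ into $J' \cup \{c_\ell\}$.

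By the definition of $c_{\ell+1}$ as the $\prec_r$-leftmost vertex in the connected component of $c_\ell$ in $H_b^{j_\ell}$, Lemma~\ref{clm:taille1} provides a path in $H_b^{j_\ell}$ from $c_\ell$ to $c_{\ell+1}$; sliding the leftmost token along this path while $J'$ remains fixed is a legal reconfiguration, since the other tokens sit at or to the right of $j_\ell$ and so cannot interfere. This produces $J' \cup \{c_{\ell+1}\}$. Finally, since $c_{\ell+1} \preceq_r c_\ell$, reversing $\mathcal{P}$ in $H_b$ while keeping $c_{\ell+1}$ as the leftmost token is valid by the second item of Lemma~\ref{lem:totheleft} (with $c_{\ell+1}$ playing the role of $v$ and $c_\ell$ the role of $u$), yielding $(J \setminus \{u\}) \cup \{c_{\ell+1}\} = J_{\ell+1}$, which closes the induction at $\ell = t$.

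The main obstacle I foresee is just verifying that the $\prec_r$-monotonicity of the sequence $(c_\ell)$ holds throughout, since this is what legitimizes both the inclusion $H_u \subseteq H_{c_\ell}$ (used to invoke the $RM$ value) and the applicability of Lemma~\ref{lem:totheleft} to the reversed sequence. This monotonicity is however immediate from $c_{\ell+1}$ being chosen as the $\prec_r$-minimum of a connected component containing $c_\ell$, and so no idea beyond those already used in Claim~\ref{clm:superieurmodif} is needed.
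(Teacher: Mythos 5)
Your proof is correct and mirrors the paper's argument almost step for step: the same three-phase induction (push $J \setminus u$ to the right keeping $c_\ell$ fixed, slide $c_\ell$ to $c_{\ell+1}$ within $H_b^{j_\ell}$, then reverse to reattach the rest of $J\setminus u$). The only divergence is cosmetic --- you cite Lemma~\ref{clm:taille1} and the second item of Lemma~\ref{lem:totheleft} for the last two steps where the paper invokes Lemma~\ref{lem:totheright} and Lemma~\ref{lem:totheleft2}; your citations are in fact the more direct ones for the moves being made.
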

\begin{proof}
Let us prove it by induction on $t$. For $t=0$, we have $c_t=u$ and the conclusion holds.

Assume now that  $J$ and $(J \setminus \{ u \} ) \cup \{ c_t \}$ are in the same connected component of $TS_k(H_b)$. Let us prove it for $t+1$. By definition of $RM(J \setminus \{u \}, H_{c_t})$, one can move $J \setminus \{ u \}$ into an independent set of leftmost vertex $RM(J \setminus \{u \}, H_{c_t})$. This sequence of moves also transforms $(J \setminus \{ u \}) \cup \{ c_t \}$ into an independent set with leftmost vertex $u$ and second leftmost vertex $RM(J \setminus \{ u \}, H_{c_t})$. By Lemma~\ref{lem:totheright}, this latter independent set can be slided into an independent set with leftmost vertex $c_{t+1}$ without sliding any other token. Then there is a transformation from $J$ into an independent set $K$ with leftmost vertex $c_{t+1}$. Since $c_{t+1}$ is the $\prec_r$-smallest vertex on this sequence, Lemma~\ref{lem:totheleft2} ensures that we can transform $K$ into the independent set $(J \cup \{ c_{t+1} \}) \setminus \{ u \}$, which achieves the proof.
\end{proof}

By Claim~\ref{clm:superieurmodif2} there exist independent sets in the connected component of $J$ that have leftmost vertices $w_r$ and $w_l$. Indeed, we have shown that there is an independent set with leftmost vertex $w_l$. And since in its component there is an independent set with second leftmost vertex $j_{t+1}=  RM(J \setminus \{u \}, H_c)$, Lemma~\ref{lem:totheright} ensures that there exists an independent set with leftmost vertex $w_r$.

Let us now prove that the leftmost vertex $v$ of any independent set in the connected component of $J$ in $TS_k(G_a)$ satisfies $r(v) \geq w_l$ and $l(v) \leq  w_r$.

\begin{claim}\label{clm:tight2}
The leftmost vertex $v$ of any independent set in the connected component of $J$ in $TS_k(G_a)$ satisfies $r(v) \geq w_l$ and $l(v) \leq  w_r$.
\end{claim}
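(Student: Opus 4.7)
The plan is to adapt the strategy of Claim~\ref{clm:inferieur} to the right-subset setting. Write $c_t = w_l$ for the terminating value of $c$ in Procedure~\ref{alg:calculw2} and $j^{\star}$ for the corresponding value of $j$, so that $j^{\star} = RM(J \setminus u, H_{c_t})$. I will show that every independent set $S$ in the connected component of $J$ in $TS_k(H_b)$ satisfies two invariants: (i) $r(lm(S)) \geq r(w_l)$, and (ii) the second leftmost vertex $z$ of $S$ satisfies $l(z) \leq l(j^{\star})$. Invariant (i) is exactly the first half of Claim~\ref{clm:tight2}. Given both invariants, the second half follows by applying Lemma~\ref{lem:totheright} to a reconfiguration between an independent set with leftmost vertex $c_t$ (which exists by Claim~\ref{clm:superieurmodif2}) and $S$: since (ii) caps the second leftmost vertex along the whole sequence by $j^{\star}$, the leftmost vertices of the two endpoints lie in the same connected component of $H_b^{j^{\star}}$, and $w_r$ is by construction the $\prec_l$-maximum vertex of that component.

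To establish the invariants, suppose some $S$ in the component violates one of them, and choose a reconfiguration from $J$ to $S$ together with a first independent set in the sequence on which either (i) or (ii) fails. If (i) fails first at $S$, let $S'$ be its predecessor; it has leftmost vertex $y$ with $r(y) \geq r(w_l)$ slid along an edge to the leftmost vertex $x$ of $S$ with $r(x) < r(w_l)$, while the second leftmost vertex $w$ common to $S$ and $S'$ satisfies $l(w) \leq l(j^{\star})$ because (ii) still holds at $S'$. Hence neither $x$ nor $y$ is incident to $w$: the move takes place entirely inside $H_b^{j^{\star}}$, and so $x$ lies in the connected component of $w_l$ in $H_b^{j^{\star}}$ with strictly smaller right endpoint, contradicting the termination of the while loop of Procedure~\ref{alg:calculw2}. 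If instead (ii) fails first at $S$, then (i) holds throughout the sequence, so every intermediate leftmost vertex has right endpoint at least $r(c_t)$, and every non-leftmost vertex therefore lies in $H_{c_t}$. Lemma~\ref{lem:totheleft2} then produces a transformation of $J \setminus u$ into $S$ minus its leftmost vertex inside $H_{c_t}$, whose leftmost vertex is $z$ with $l(z) > l(j^{\star})$, contradicting $RM(J \setminus u, H_{c_t}) = j^{\star}$.

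The main obstacle is the coupling between invariants (i) and (ii) in the minimal-counterexample argument: when (i) fails first one needs (ii) at the predecessor to place $w$ to the left of $j^{\star}$, while when (ii) fails first one needs (i) throughout in order to confine the second-token sequence to $H_{c_t}$. Once this bookkeeping is set up correctly, both cases reduce directly to the terminal behavior of Procedure~\ref{alg:calculw2}.
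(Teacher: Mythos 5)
Your proposal is structurally sound and, in broad strokes, follows the same contradiction strategy as the paper, but you repackage the bookkeeping into two explicit invariants and derive the $w_r$ half of the claim abstractly from Lemma~\ref{lem:totheright} rather than by the paper's direct path argument. The paper's proof instead picks the first independent set $K$ that violates either conclusion outright and, in both cases, produces a bound on the second vertex on the fly via Lemma~\ref{lem:totheleft2} and the definition of $RM$, then closes with an explicit interval-crossing observation: since the slide $(v',v)$ has $r(v) < r(w_l) \leq r(v')$ and the endpoints overlap, the point $r(w_l)$ lies in the interval $v'$, so $w_l$ is \emph{directly adjacent} to $v'$, which together with $v'\sim v$ puts $v$ in the component of $w_l$ in $H_b^{j^\star}$. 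Your decomposition into invariants (i) and (ii), with a clean deduction of the $w_r$ bound from Lemma~\ref{lem:totheright} once both invariants are in place, is arguably a tidier organisation of the same content.

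There is, however, one step you leave genuinely unjustified. In the case where invariant (i) fails first, after observing that the move $(y,x)$ lies inside $H_b^{j^\star}$, you write ``and so $x$ lies in the connected component of $w_l$ in $H_b^{j^\star}$.'' That inference does not follow merely from the move being inside $H_b^{j^\star}$: you still need to know that $y$ (or $x$) is connected to $w_l$ in $H_b^{j^\star}$. The paper supplies this with the interval-crossing argument described above ($r(w_l) \in [l(y),r(y)]$ because $l(y) < r(x) < r(w_l) \leq r(y)$, hence $w_l$ and $y$ are adjacent). Alternatively, you could obtain it by re-applying Lemma~\ref{lem:totheright} to the prefix of the sequence from $(J\setminus u)\cup c_t$ up to $S'$ — this works since invariants (i) and (ii) hold on that prefix, and the algorithm's successive $c$-values are all connected in $H_b^{j^\star}$ because the $j$-values are $\prec_l$-nondecreasing — but your text neither carries out the interval-crossing argument nor spells out this reapplication. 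As written, that ``and so'' silently assumes exactly what still needs to be proved. Your treatment of the second case ((ii) failing first) is complete and matches the paper's.
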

\begin{proof}
Assume by contradiction that there is an independent set $K$ in the component of $J$ with leftmost vertex $v$ satisfying $r(v) < r(w_l)$ or $l(v) >  l(w_r)$. Let $\mathcal{Q}$ be a transformation from $J$ to $K$. Free to extract a sub-sequence of $\mathcal{Q}$, we can moreover assume that $K$ is the first independent set of the sequence with this property.

First assume that $r(v)<r(w_l)$. Let $w$ be the second leftmost vertex of $K$. Since only one token is moved at any step, $w$ also appears in the independent set $K'$ before $K$ in the sequence (only the leftmost token is slided at that step). Since at any step of the sequence up to $K'$, the leftmost vertex is $\prec_r$-larger than $w_l$, we have a transformation of $J \setminus u$ into $K \setminus w$ in $G_{w_l}$ by Lemma~\ref{lem:totheleft2}. By definition of $RM(J \setminus u, H_{w_l})$, we have $l(w) \leq l(RM(J \setminus u, H_{w_l})$.
Let $v'$ be the leftmost vertex of $K'$. We have that $(v,v')$ is an edge, $r(v) < r(w_l)$ and $r(v') \geq r(w_l)$. Thus $w_l$ is adjacent to $v'$ or to $v$. Moreover $v'$ and $v$ are not adjacent to $w$ and consequently not to $RM(J \setminus u, H_{w_l})$ since  $l(w) \leq l(RM(J \setminus u, H_{w_l}))$.
It follows that there is a path from $w_l$ to $v$ in $G_a^{RM(J \setminus u, H_{w_l})}$, so $c_{t+1} \neq c_t$ at the final round of the algorithm, thus the output value is not $c_t=w_l$, a contradiction.

If $l(v) > w_r$, we can use similar arguments. Let $w$ be the second leftmost vertex of $K$. Since only one token is moved at any time, $w$ also appears in the independent set $K'$ before $K$ in the sequence. Since at any step of the sequence up to $K'$, the leftmost vertex is $\prec_r$-larger than $w_l$, we have a transformation of $J \setminus u$ into $K \setminus w$ in $G_{w_l}$ by Lemma~\ref{lem:totheleft2}. By definition of $RM(J \setminus u, H_{w_l})$, we have $l(w) \leq l(RM(J \setminus u, H_{w_l})$.
Let $v'$ be the leftmost vertex of $K'$. We have that $(v,v')$ is an edge, $l(v) > l(w_r)$ and $l(v') \leq l(w_r)$. Thus $w_r$ is either incident to $v'$ or to $v$. Moreover $v'$ and $v$ are not incident to $w$ thus not to $RM(J \setminus u, H_{w_l})$ since  $l(w) \leq l(RM(J \setminus u, H_{w_l}))$. There is a path from $w_r$ to $v$ in $G_a^{RM(J \setminus u, H_{w_l})}$, the algorithm cannot output $w_r$, a contradiction.

\end{proof}
The combination of both claims completes the proof of Lemma~\ref{clm:procedure2}.
\end{proof}

To conclude let us prove that we can determine in polynomial time if two independent sets are in the same connected component. Using Lemma~\ref{clm:taille1}, we can easily determine if two independent sets of size one are in the same connected component. Let us explain how we can recursively determine if two independent sets are in the same connected component of $TS_k(G)$.

\begin{lemma}\label{lem:samecomp}
Let $I$ and $J$ be two independent set of size $k \geq 2$. The independent sets $I$ and $J$ are in the same connected component of $TS_k(G)$ if and only if:
\begin{enumerate}
 \item $LM(I,G)$ and $LM(J,G)$ are the same, and
 \item Let $u$ the leftmost vertex of $I$, $v$ the leftmost vertex of $J$ and $a=LM(I,G)$. The independent sets $I \setminus \{ u \}$ and $J \setminus \{ v\}$ are in the same connected component of $TS_{k-1}(G_a)$.
\end{enumerate}
\end{lemma}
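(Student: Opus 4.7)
The plan is to prove the two implications separately. The forward direction will reduce to a single application of Lemma~\ref{lem:totheleft2}; the backward direction will combine the (now established) forward direction with a ``lifting'' argument that adds the common leftmost vertex $a$ back to a transformation performed inside $G_a$.

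For the forward direction, suppose $I$ and $J$ lie in the same connected component of $TS_k(G)$. Condition~(1) is immediate because $LM(\cdot, G)$ is constant on a connected component of $TS_k(G)$. For condition~(2), set $a = LM(I,G) = LM(J,G)$. By the very definition of $LM$, the common component contains an independent set $K$ with $lm(K) = a$, and every set in the component has leftmost vertex $\succeq_r a$. Concatenating transformations $I \leadsto K \leadsto J$ yields a sequence whose $\prec_r$-smallest leftmost token is exactly $a$. Lemma~\ref{lem:totheleft2}, applied with $w = a$, then produces a transformation from $I \setminus \{u\}$ to $J \setminus \{v\}$ inside $G_a$, which is exactly condition~(2). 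Note also that $I \setminus \{u\}$ and $J \setminus \{v\}$ do sit inside $G_a$: if $x \in I \setminus \{u\}$ then $l(x) > r(u) \geq r(a)$ since $a \preceq_r u$.

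For the backward direction, assume both conditions hold and let $a = LM(I,G) = LM(J,G)$. Pick $I^*$ in the component of $I$ and $J^*$ in the component of $J$ with $lm(I^*) = lm(J^*) = a$; both exist by the definition of $LM$. Applying the (just proven) forward direction to the pairs $(I, I^*)$ and $(J, J^*)$ gives that $I \setminus \{u\}$ is in the same component as $I^* \setminus \{a\}$ and $J \setminus \{v\}$ is in the same component as $J^* \setminus \{a\}$ in $TS_{k-1}(G_a)$. Using condition~(2) and transitivity, I obtain a transformation $\mathcal{Q}$ from $I^* \setminus \{a\}$ to $J^* \setminus \{a\}$ in $G_a$. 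Every vertex touched by $\mathcal{Q}$ lies in $G_a$, so its left endpoint is strictly greater than $r(a)$ and it is non-adjacent to $a$; appending $a$ to every intermediate set turns $\mathcal{Q}$ into a valid transformation from $I^*$ to $J^*$ in $TS_k(G)$. Chaining $I \leadsto I^* \leadsto J^* \leadsto J$ shows that $I$ and $J$ are in the same connected component of $TS_k(G)$.

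The one delicate point I expect to require care when writing out the formal proof is the forward direction's choice of transformation: Lemma~\ref{lem:totheleft2} is only strong enough to place the derived transformation in $G_a$ if the original sequence actually realises the minimum leftmost token; an arbitrary transformation from $I$ to $J$ would only land in some $G_w$ with $w \succeq_r a$, which is weaker than condition~(2) demands. Routing through the auxiliary set $K$ repairs this. The backward lifting step, by contrast, is routine once $a$ is fixed, since by definition every vertex of $G_a$ is non-adjacent to $a$ and therefore no move in $\mathcal{Q}$ can interact with the new token on $a$.
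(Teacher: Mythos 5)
Your proof is correct and follows essentially the same approach as the paper: the forward direction rests on Lemma~\ref{lem:totheleft2}, and the backward direction routes through auxiliary sets $I^*, J^*$ with leftmost vertex $a$ and ``lifts'' a transformation from $G_a$ back to $G$ by re-inserting $a$.

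One small remark on your ``delicate point.'' You claim that applying Lemma~\ref{lem:totheleft2} to an arbitrary transformation from $I$ to $J$ only yields a transformation in some $G_w$ with $w \succeq_r a$, and that this is ``weaker than condition~(2) demands.'' This is not so: if $a \preceq_r w$, i.e. $r(a) \leq r(w)$, then every vertex $x$ with $l(x) > r(w)$ also satisfies $l(x) > r(a)$, so $G_w$ is an induced subgraph of $G_a$ and any transformation carried out in $G_w$ is in particular a transformation in $G_a$. The paper in fact takes precisely this shortcut: it applies Lemma~\ref{lem:totheleft2} directly to an arbitrary transformation $\mathcal{Q}$ from $I$ to $J$, observing that the leftmost token always has right endpoint at least $r(a)$, so the resulting transformation of $I \setminus \{u\}$ into $J \setminus \{v\}$ lands inside $G_a$. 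Your detour through the auxiliary set $K$ with $lm(K)=a$ is therefore unnecessary, though it does no harm and your conclusion is correct.
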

\begin{proof}
 The first point is indeed necessary. If $I$ has an independent set in its component with leftmost vertex $a$ and $J$ does not have such an independent set in its component, the components of $I$ and $J$ must be different. And conversely if they are in the same connected component, they must have the same worst l-index.
 
 Assume now that $I$ and $J$ satisfies $a=LM(I,G)=LM(J,G)$. First assume that $I$ and $J$ are in the same connected component. Let $\mathcal{Q}$ be a transformation from $I$ to $J$. Note that at any step of this transformation the right extremity of the leftmost vertex is at least $r(a)$ by definition of $a$. Lemma~\ref{lem:totheleft2} ensures that there is a transformation from $I \setminus \{ u \}$ to $J \setminus \{ v \}$. Thus the second point is necessary.
 
 Let us prove that it is sufficient. Assume that there is a transformation from $I \setminus \{ u \}$ into $J \setminus \{ v \}$ in $G_a$. Let $\mathcal{Q}$ be such a transformation. Let $\mathcal{P}_1$ be a transformation of $I$ into $(I \cup \{ a \} )\setminus \{ u \}$ and $\mathcal{P}_2$ a transformation of $J$ into $(J \cup \{ a \} )\setminus \{ v \}$. By Lemma~\ref{lem:totheleft2}, we can transform the independent set obtained after $\mathcal{P}_1$ into the independent set $I \cup \{ a \} \setminus \{ u\}$.
 We can similarly do the same for $J$. Since there is a sequence transforming $I \setminus \{ u \}$ into $J \setminus \{ v\}$ in $G_a$, this sequence transforms $I \cup \{ a \} \setminus \{ u\}$ into $J \cup \{ a \} \setminus \{ v\}$, which concludes the proof.
\end{proof}
Using recursively Lemma~\ref{lem:samecomp}, we can determine in polynomial time if two independent sets are in the same connected component, hence Theorem~\ref{th:pairsofinterval}.

\bibliographystyle{plain}

\bibliography{biblio}

\end{document}